\newtheorem{theorem}{\bf Theorem}[section]
\newtheorem{lemma}[theorem]{\bf Lemma}
\newtheorem{proposition}[theorem]{\bf Proposition}
\newenvironment{proof}{\noindent{\em Proof:}}{\quad \hfill$\Box$\vspace{2ex}}
\newtheorem{definition}[theorem]{\bf Definition}
\def \bN {\Bbb N}
\def \bZ {\Bbb Z}
\def \bR {\Bbb R}
\def \cA {{\cal A}}
\def \cB {{\cal B}}
\def \cH {{\cal H}}
\def \and {\, \mbox{\rm and}\, }
\def \sinc {\,{\rm sinc}\,}
\def \supp {\,{\rm supp}\,}
\newcommand{\Rmnum}[1]{\expandafter\@slowromancap\romannumeral #1@}
\begin{document}
\title{\bf Exponential Approximation of Multivariate Bandlimited Functions from Average Oversampling}
\author{Wenjian Chen\thanks{School of Mathematics
and Computational Science, Sun Yat-sen University, Guangzhou 510275,
P. R. China. E-mail address: {\it wenjianchen66@gmail.com}.}\quad and \quad
Haizhang Zhang\thanks{Corresponding author. School of Mathematics and Computational
Science and Guangdong Province Key Laboratory of Computational
Science, Sun Yat-sen University, Guangzhou 510275, P. R. China. E-mail address: {\it zhhaizh2@mail.sysu.edu.cn}. Supported in part by Natural Science Foundation of China under grants 11222103 and 11101438, and by the US Army Research
 Office.}}
\date{}
\maketitle
\begin{abstract}
Instead of sampling a function at a single point, average sampling takes the weighted sum of function values around the point. Such a sampling strategy is more practical and more stable. In this note, we present an explicit method with an exponentially-decaying approximation error to reconstruct a multivariate bandlimited function from its finite average oversampling data. The key problem in our analysis is how to extend a function so that its inverse Fourier transform decays at an optimal rate to zero at infinity.

\noindent{\bf Keywords:} average sampling, exponential decayness, multivariate bandlimited functions, the Shannon sampling theorem\\

\noindent {\bf 2010 Mathematical Subject Classification: 41A25, 62D05}
\end{abstract}

\section{Introduction}
The main purpose of this note is to provide an explicit formula to reconstruct a multivariate bandlimited function from its finite average sampling data such that the associated approximation error decays exponentially to zero as the number of sample data increases. We begin with introducing the Paley-Wiener space of bandlimited functions and the average sampling strategy on it.

Let $d\in\bN$ be the dimension of the underlying Euclidean space and $\delta>0$ be the {\it bandwidth}. Denote by $\cB_\delta(\bR^d)$ the {\it Paley-Wiener} space of functions $f\in L^2(\bR^d)\cap C(\mathbb{R}^d)$ that are bandlimited to $[-\delta,\delta]^d$, namely, $\supp \hat{f}\subseteq [-\delta,\delta]^d$. Here $\hat{f}$ is the Fourier transform of $f$ that is defined as
$$
\hat{f}(\xi):=\frac{1}{(\sqrt{2\pi})^d}\int_{\mathbb{R}^d}f(x)e^{-i\langle x,\xi\rangle}dx, \ \ \ \xi\in\bR^d,
$$
where $\langle\cdot,\cdot\rangle$ is the standard inner-product on $\bR^d$. Each $\cB_\delta(\bR^d)$ is a Hilbert space after inheriting the norm of $L^2(\bR^d)$. Mathematical researches on the sampling theory originated from the celebrated Shannon sampling theorem \cite{C. E. Shannon,Whittaker}, which states that each $f\in\cB_\delta(\bR^d)$ can be exactly reconstructed from its function values sampled by the {\it Nyquist rate} $\frac\pi\delta$. Precisely, it holds for all $f\in\cB_\delta(\bR^d)$
\begin{equation}\label{shannonseries}
f(x)=\sum_{j\in\bZ^d}f(j\frac\pi\delta)\prod_{l=1}^d\frac{\sin(\delta x_l-\pi j_l)}{\pi(x_l-\frac\pi\delta j_l)}, \ \ \ x\in\bR^d,
\end{equation}
where the series converges absolutely and uniformly on $x\in\bR^d$. Many generalizations of the Shannon sampling theorem have been established (see, for example, \cite{GP,Hammerich1,HKK,MNS,NW,Zhang2009,Zhang2}).

We are concerned with the case when only finitely many sample data are available. Set $J_n:=[-n,n]^d\cap \bZ^d$ for $n\in\bN$. Looking at the Shannon series in (\ref{shannonseries}), let us assume that we have the finite sample data $\{f(j\frac\pi\delta):j\in J_n\}$ of some $f\in\cB_\delta(\bR^d)$. Naturally, one tends to truncate the Shannon series \eqref{shannonseries} as a manner of approximately reconstructing $f$. This turns out to be the optimal reconstruction method in the worst case scenario \cite{Rivlin,Charles A. Micchelli}. However, this method is of the slow approximation rate of $O(1/\sqrt{n})$, \cite{HT,D. Jagerman,Jordan,TI}. Dramatic improvement of the approximation rate can be achieved by using oversampling data. Here, oversampling means to sample at a rate strictly less than the Nyquist sampling rate $\pi/\delta$. Through a change of variables if necessary, we assume that the bandwidth $\delta<\pi$ and functions in $\cB_\delta(\bR^d)$ are sampled at the integer points, thus constituting oversampling as $1<\pi/\delta$.

It has been understood that one can reconstruct a {\it univariate} bandlimited function from its finite oversampling data with an exponentially decaying approximation error. Three such methods have been proposed in \cite{D. Jagerman,Charles A. Micchelli,Qian1}. The idea is to use a regularized Shannon series
$$
\sum_{j\in J_n}f(j)\sinc(x-n)\omega(x-n).
$$
to reconstruct $f\in \cB_\delta(\bR)$ from the finite oversampling data $\{f(j):-n\le j\le n\}$. Here, $\sinc t:=\sin(\pi t)/(\pi t)$. In \cite{D. Jagerman}, by letting $\omega(t)=\sinc^m((\pi-\delta)t/m)$ with $m=1+\lfloor n(\pi-\delta)/e\rfloor$, the approximation order of $O\left(\frac1n \exp(-\frac{\pi-\delta}en)\right)$ was obtained. Gaussian regularizers $\omega(t):=\exp(-\frac {t^2}{2r^2})$ were proposed in \cite{Wei2,Wei1}. The associated error analysis has been conducted in \cite{Qian1}. By letting $r=\sqrt{n/(\pi-\delta)}$, the approximation order of Gaussian regularized Shannon series was found to be $O\left(\sqrt{n} \exp(-\frac{\pi-\delta}2n)\right)$. In \cite{Charles A. Micchelli}, a spline function regularizer $\omega$ was used and the approximation order of $O\left(\frac1{\sqrt{n}}\exp(-\frac{\pi-\delta}2n)\right)$ was proved.

In practice, due to the limitation of the sampling machine, it is difficult to sample a function exactly at the integers. The following average sampling strategy
\begin{equation}\label{samplingstrategy}
\mu_j(f):=\int_{[-\frac\sigma2,\frac\sigma2]^d}f(t+j)d\nu(t), \ \ \ j\in \bZ^d,
\end{equation}
is more practical. Here, $\sigma>0$ and $\nu$ is a probability Borel measure on $[-\frac\sigma2,\frac\sigma2]^d$ that is usually discrete in real applications. Moreover, average sampling is more stable than sampling at a single point as the variance of the sampling noise tends to be reduced by the averaging process. For instance, highly robust reconstruction algorithms based on average sampling have been proposed in \cite{DaubechiesDevore}. There have been many extensions of the Shannon sampling theorem for average sampling \cite{Aldroubi2002,Aldroubi2005,Grochenig,SunZhou2002a,SunZhou2002}.

The major objective of this note is to present a method to reconstruct a multivariate function $f\in \cB(\bR^d)$ from its average oversampling data $\{\mu_j(f):j\in J_n\}$ such that the corresponding approximation error decays exponentially to zero as $n$ increases. We shall see that this question connects closely to the problem of smoothly extending a function so that the inverse Fourier transform of the extended function decays at an optimal rate at infinity. In the one-dimensional case, the problem is relatively easier to analyze as the region to be extended is only an interval. As result, an algorithm to exponentially reconstruct a univariate bandlimited function from its finite average oversampling data has recently been established in \cite{HZ}. In the multivariate case, the problem poses more difficulty as the boundary of the region to be extended is not just two points. The method in \cite{HZ} for the univariate case works only when the sampling probability $\nu$ in \eqref{samplingstrategy} is separated, namely, a tensor product of one-dimensional measures.

The rest sections are organized as follows. In Section 2, we present our approach and key problem. In Section 3, we provide a solution of the key problem for a general sampling probability $\nu$. As a result, we establish a method to exponentially reconstruct a multivariate bandlimited function $f\in\cB_\delta(\bR^d)$ from its average oversampling data $\{\mu_j(f):j\in J_n\}$. When $\nu$ is separated, the approximation error can be improved. We give analysis for this particular case in Section 4. Our main contribution is to explicitly construct a weight function $\Phi\in \cB_{2\pi-\delta}(\bR^d)$ such that
$$
\sup_{x\in(0,1)^d}\Big|f(x)-\sum_{j\in[-n,n]^d}\mu_j(f)\Phi(x-j)\Big|\le \|f\|_{L^2(\bR^d)}\frac{C}{\sqrt{n}}\exp\Big(-\frac n{e\rho}\Big),
$$
where $C$ and $\rho$ are two constants depending on $d,\delta,\sigma$. When $\nu$ is separated, $\rho$ depends on $\delta$ and $\sigma$ only. Detailed expressions of these two constants will be given in Theorems \ref{maintheorem} and \ref{main2}.

\section{The Approach and Key Problem}
We consider reconstructing a bandlimited function $f\in\cB_\delta(\bR^d)$ from its finite average sample data
\begin{equation}\label{samplingstrategy2}
\mu_j(f)=\int_{[-\frac\sigma2,\frac\sigma2]^d}f(t+j)d\nu(t), \ \ \ j\in J_n,
\end{equation}
where $\nu$ is a probability measure on $[-\sigma/2,\sigma/2]^d$. Our approach is to first have a complete reconstruction formula assuming that infinite sample data $\{\mu_j(f):j\in\bZ^d\}$ are available and later to truncate the formula to only use the finite data $\{\mu_j(f):j\in J_n\}$.

The inverse Fourier transform of the sampling measure $\nu$ is crucial in our analysis. Set
\begin{equation}\label{U}
U(\xi):=\int_{[-\frac\sigma2,\frac\sigma2]^d}e^{i\langle t,\xi\rangle}d\nu(t), \ \ \ \xi\in\bR^d.
\end{equation}
We shall assume that $\sigma$ is small enough so that $U$ is nonzero on $[-\delta,\delta]^d$. The precise restriction on $\sigma$ will be imposed later on.

We first seek a complete sampling reconstruction formula of the form
\begin{equation}\label{dualframeexpansion}
f=\frac{1}{(\sqrt{2\pi})^d}\sum_{j\in \bZ^d}\mu_{j}(f)\Phi(\cdot - j)
\end{equation}
for all $f\in\cB_\delta(\bR^d)$. This formulation is to be truncated. Thus, we should find a function $\Phi$ satisfying (\ref{dualframeexpansion}) and is fast-decaying at infinity. We first make two simple observations.

\begin{lemma}\label{mujframe}
It holds for all $f\in\cB_\delta(\bR^d)$
\begin{equation}\label{frameinequality}
\sum_{j\in\bZ^d}|\mu_j(f)|^2\le \|f\|_{L^2(\bR^d)}^2.
\end{equation}
\end{lemma}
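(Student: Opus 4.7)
The plan is to recognize the map $f \mapsto (\mu_j(f))_{j\in\bZ^d}$ as a composition of a convolution-type smoothing (by the measure $\nu$) followed by sampling on the integer lattice, and then apply the standard Plancherel/Parseval identity for bandlimited functions at the Nyquist rate together with the trivial bound $|U|\le 1$.

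More concretely, the first step is to introduce the smoothed function
\begin{equation*}
g(x):=\int_{[-\sigma/2,\sigma/2]^d} f(x+t)\,d\nu(t),\qquad x\in\bR^d,
\end{equation*}
so that $\mu_j(f)=g(j)$ for every $j\in\bZ^d$. Using Fubini's theorem inside the definition of the Fourier transform, one checks that $\hat g(\xi)=\hat f(\xi)\,U(\xi)$, where $U$ is the function defined in \eqref{U}. In particular $\hat g$ has the same support as $\hat f$, so $g\in\cB_\delta(\bR^d)\subseteq \cB_\pi(\bR^d)$ because the standing assumption $\delta<\pi$ is in force.

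The second step is to invoke the Parseval identity for functions bandlimited to the cube $[-\pi,\pi]^d$: since $\{(2\pi)^{-d/2}e^{i\langle j,\xi\rangle}\}_{j\in\bZ^d}$ is an orthonormal basis of $L^2([-\pi,\pi]^d)$ and, with our Fourier normalization, the samples $g(j)$ are precisely the Fourier coefficients of $\hat g$ up to reindexing, one obtains
\begin{equation*}
\sum_{j\in\bZ^d}|g(j)|^2=\|g\|_{L^2(\bR^d)}^2=\int_{[-\delta,\delta]^d}|\hat f(\xi)|^2|U(\xi)|^2\,d\xi.
\end{equation*}
The final step is the elementary estimate $|U(\xi)|\le\int d\nu=1$, valid because $\nu$ is a probability measure, which upgrades the displayed equality into the inequality $\sum_{j\in\bZ^d}|\mu_j(f)|^2\le\|\hat f\|_{L^2}^2=\|f\|_{L^2}^2$ by Plancherel.

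There is no real obstacle here: the only subtlety is making sure the Fubini interchange is justified (trivial since $\nu$ is finite and $f\in L^2\cap C$ is enough for local boundedness on the compact support of $\nu$) and that the Parseval identity for the lattice samples is applied with the correct normalization matching the $(\sqrt{2\pi})^{-d}$ convention used in the definition of $\hat f$. The lemma is really the Bessel inequality for the ``averaging then sampling'' analysis map, with operator norm controlled by $\|U\|_\infty\le 1$.
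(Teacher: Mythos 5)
Your proof is correct, but it is organized differently from the paper's. The paper bounds each term first, applying the Cauchy--Schwarz (Jensen) inequality to the probability measure $\nu$ to get $|\mu_j(f)|^2\le\int|f(t+j)|^2\,d\nu(t)$, then exchanges sum and integral and applies the Nyquist-rate Parseval identity $\sum_{j}|h(j)|^2=\|h\|_{L^2}^2$ to each translate $h=f(\cdot+t)\in\cB_\pi(\bR^d)$ separately, finally integrating against $\nu$. You instead commute the averaging with the Fourier transform: you package the sampling functionals as $\mu_j(f)=g(j)$ for the single smoothed function $g$ with $\hat g=\hat f\,U$, apply the same Parseval identity once to $g$, and put the inequality entirely into the multiplier bound $\|U\|_{L^\infty}\le 1$. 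Both arguments hinge on the identity \eqref{mujframeboundeq1}; yours exposes the operator-theoretic structure more cleanly (averaging is a Fourier multiplier of sup-norm at most one composed with an isometric sampling map), while the paper's avoids having to verify that $g$ is itself bandlimited and in $L^2$. One small point of care in your version: since $f\in\cB_\delta(\bR^d)$ need not lie in $L^1(\bR^d)$, the identity $\hat g=\hat f\,U$ is most cleanly justified not by Fubini ``inside the definition of the Fourier transform'' but by writing $f$ via the inverse transform of $\hat f\in L^1([-\delta,\delta]^d)$ and interchanging the $d\xi$ and $d\nu$ integrals there (or by citing that convolution with a finite measure acts on $L^2$ as the multiplier $\hat\nu$); this is routine and does not affect the validity of the argument.
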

\begin{proof}
It is well-known that $\cB_\pi(\bR^d)$ equipped with the norm of $L^2(\bR^d)$ is a reproducing kernel Hilbert space with the reproducing kernel
$$
\sinc(x):=\prod_{l=1}^d\frac{\sin\pi x_l}{\pi x_l},\ \ x\in\bR^d.
$$
In other words, it holds
$$
f(x)=\langle f,\sinc(x-\cdot)\rangle_{L^2(\bR^d)}\ \ \mbox{ for all }x\in\bR^d,\ f\in\cB_\pi(\bR^d).
$$
Moreover, $\sinc(j-\cdot)$, $j\in\bZ^d$ form an orthonormal basis for $\cB_\pi(\bR^d)$. Consequently, we have by the Parseval identity
\begin{equation}\label{mujframeboundeq1}
\|f\|_{L^2(\bR^d)}^2=\sum_{j\in\bZ^d}|\langle f,\sinc(j-\cdot)\rangle_{L^2(\bR^d)}|^2=\sum_{j\in\bZ^d}|f(j)|^2,\ f\in\cB_\pi(\bR^d).
\end{equation}
Now set $f\in\cB_\delta(\bR^d)$. Then $f\in\cB_\pi(\bR^d)$ as $\delta<\pi$. Note that $\cB_\pi(\bR^d)$ is translation-invariant. It implies that for each $t\in\bR^d$, $f(\cdot+t)\in\cB_\pi(\bR^d)$. We have by the definition (\ref{samplingstrategy2}) and the Cauchy-Schwartz inequality
$$
\sum_{j\in\bZ^d}|\mu_j(f)|^2= \sum_{j\in\bZ^d}\left|\int_{[-\frac\sigma2,\frac\sigma2]^d}f(t+j)d\nu(t)\right|^2\le \sum_{j\in\bZ^d}\int_{[-\frac\sigma2,\frac\sigma2]^d}|f(t+j)|^2d\nu(t).
$$
It follows from this inequality and (\ref{mujframeboundeq1})
$$
\sum_{j\in\bZ^d}|\mu_j(f)|^2\le \int_{[-\frac\sigma2,\frac\sigma2]^d}\sum_{j\in\bZ^d}|f(t+j)|^2d\nu(t)=\int_{[-\frac\sigma2,\frac\sigma2]^d}\|f(t+\cdot)\|_{L^2(\bR^d)}^2d\nu(t)=\|f\|_{L^2(\bR^d)},
$$
which completes the proof.
\end{proof}

To make sure that the series in (\ref{dualframeexpansion}) is well-defined, we shall choose $\Phi$ that is also bandlimited. To explain the reason, we need the notion of Bessel sequences.

\begin{definition}
Let $\cH$ be a separated Hilbert space. We call $\{f_j:j\in\bN\}\subseteq \cH$ a Bessel sequence in $\cH$ if there exists a positive constant $B$, called the Bessel bound for $\{f_j:j\in\bN\}$, such that for all $f\in\cH$
$$
\Big(\sum_{j=1}^\infty |\langle f,f_j\rangle_{\cH}|^2\Big)^{1/2}\le B\|f\|_{\cH}.
$$
\end{definition}

There is a useful characterization of Bessel sequences (see, \cite{Ole}, page 53).
\begin{lemma}\label{besselcharacter}
Let $\cH$ be a separated Hilbert space. Then $\{f_j:j\in\bN\}$ is a Bessel sequence in $\cH$ with Bessel bound $B$ if and only if for any $c=\{c_j:j\in\bN\}\in \ell^2$,
$$
\Big\|\sum_{j\in\mathbb{N}}c_jf_j\Big\|_{\cH}\leq B\|c\|_{\ell^2}.
$$
\end{lemma}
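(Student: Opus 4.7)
The plan is to recognize this as a standard duality statement: if we let $T:\ell^2\to\cH$ be the synthesis operator $Tc:=\sum_{j}c_jf_j$, then the Bessel condition says exactly that the analysis operator $T^*:\cH\to\ell^2$, $T^*f=\{\langle f,f_j\rangle_\cH\}$, is bounded with norm $\le B$, while the condition in the lemma says that $T$ itself is bounded with norm $\le B$. Since $\|T\|=\|T^*\|$, the two conditions are equivalent. The only real content of the proof is (i) justifying that the series $\sum c_jf_j$ converges in $\cH$ so that $T$ is even defined, and (ii) verifying that the adjoint of $T$ has the claimed form.

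For the forward direction ($\Rightarrow$), I would first show that for any $c\in\ell^2$ the partial sums $S_N:=\sum_{j\le N}c_jf_j$ form a Cauchy sequence in $\cH$. By duality,
\[
\|S_N-S_M\|_\cH=\sup_{\|g\|_\cH=1}\Big|\sum_{M<j\le N}c_j\langle f_j,g\rangle_\cH\Big|,
\]
and Cauchy--Schwarz together with the Bessel inequality bounds the right side by $B\bigl(\sum_{M<j\le N}|c_j|^2\bigr)^{1/2}$, which tends to $0$. Hence $Tc:=\lim_N S_N$ exists, and the same estimate applied with $M=0$ yields $\|Tc\|_\cH\le B\|c\|_{\ell^2}$.

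For the converse ($\Leftarrow$), I would take the boundedness of $T$ as the hypothesis and compute its Hilbert-space adjoint. For $c\in\ell^2$ and $f\in\cH$,
\[
\langle Tc,f\rangle_\cH=\sum_{j}c_j\langle f_j,f\rangle_\cH=\bigl\langle c,\{\overline{\langle f_j,f\rangle_\cH}\}_j\bigr\rangle_{\ell^2},
\]
(swapping sum and inner product being justified by the convergence of $Tc$ in $\cH$), so $T^*f=\{\langle f,f_j\rangle_\cH\}_j$. Therefore $\sum_j|\langle f,f_j\rangle_\cH|^2=\|T^*f\|_{\ell^2}^2\le\|T\|^2\|f\|_\cH^2\le B^2\|f\|_\cH^2$, which is the Bessel inequality with bound $B$.

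The only mildly delicate point is the convergence argument in step (i); everything else is routine Hilbert-space manipulation. I would not expect any real obstacle here, and the proof is essentially a quotation of a standard fact about the synthesis/analysis duality, as referenced in \cite{Ole}.
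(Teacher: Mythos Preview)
Your argument is correct and is essentially the standard proof of this fact via the synthesis/analysis duality. Note, however, that the paper does not actually give a proof of this lemma: it is stated without proof and attributed to \cite{Ole}, page~53. So there is no ``paper's own proof'' to compare against; what you have written is precisely the kind of routine justification the cited reference contains.
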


With the above preparation, we have the following observation.
\begin{lemma}\label{phibessel}
Let $\lambda>0$. It holds for all $x\in\bR^d$ and $\varphi\in\cB_\lambda(\bR^d)$
\begin{equation}\label{bessel}
\Big(\sum_{j\in\bZ^d}|\varphi(x-j)|^2\Big)^{1/2}\leq\lceil\frac\lambda \pi\rceil^{d/2}\|\varphi\|_{L^2(\bR^d)},
\end{equation}
where $\lceil\frac\lambda \pi\rceil$ is the smallest integer that is larger than or equal to $\frac\lambda \pi$.
\end{lemma}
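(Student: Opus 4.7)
The plan is to embed $\varphi$ into a larger Paley-Wiener space whose Nyquist lattice refines $\mathbb{Z}^d$, apply the Parseval identity there, and then discard all but one coset of the refined lattice.

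First I would set $N:=\lceil\lambda/\pi\rceil$, so that $\lambda\le N\pi$ and hence $\mathcal{B}_\lambda(\mathbb{R}^d)\subseteq\mathcal{B}_{N\pi}(\mathbb{R}^d)$. By a scaling argument from the one-dimensional Shannon theorem already used in the proof of Lemma \ref{mujframe}, the functions
$$
\psi_k(y):=N^{d/2}\prod_{l=1}^d\frac{\sin(N\pi(y_l-k_l/N))}{N\pi(y_l-k_l/N)},\qquad k\in\mathbb{Z}^d,
$$
form an orthonormal basis of $\mathcal{B}_{N\pi}(\mathbb{R}^d)$, and this space is a reproducing kernel Hilbert space with reproducing kernel $N^d\,\mathrm{sinc}(N(y-y'))$. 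In particular the analogue of (\ref{mujframeboundeq1}) reads
$$
\sum_{k\in\mathbb{Z}^d}|g(k/N)|^2=N^d\|g\|_{L^2(\mathbb{R}^d)}^2\qquad\text{for every } g\in\mathcal{B}_{N\pi}(\mathbb{R}^d).
$$

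Next, fix $x\in\mathbb{R}^d$ and apply the identity to the shifted, reflected function $g(y):=\varphi(x-y)$, which still lies in $\mathcal{B}_{N\pi}(\mathbb{R}^d)$ with $\|g\|_{L^2}=\|\varphi\|_{L^2}$. This gives
$$
\sum_{k\in\mathbb{Z}^d}|\varphi(x-k/N)|^2=N^d\|\varphi\|_{L^2(\mathbb{R}^d)}^2.
$$
Now I would partition $\mathbb{Z}^d$ into the $N^d$ cosets of the sublattice $N\mathbb{Z}^d$ by writing $k=Nj+m$ with $j\in\mathbb{Z}^d$ and $m\in\{0,1,\dots,N-1\}^d$, so that $k/N=j+m/N$ and
$$
\sum_{m\in\{0,\dots,N-1\}^d}\,\sum_{j\in\mathbb{Z}^d}|\varphi(x-j-m/N)|^2=N^d\|\varphi\|_{L^2(\mathbb{R}^d)}^2.
$$
Dropping every coset except $m=0$ and using nonnegativity of the summands yields $\sum_{j\in\mathbb{Z}^d}|\varphi(x-j)|^2\le N^d\|\varphi\|_{L^2(\mathbb{R}^d)}^2$, and taking square roots delivers (\ref{bessel}).

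There is no real obstacle here; the one place to be careful is making sure the sampling identity at rate $1/N$ is stated correctly with the right normalization $N^d$, which is why I would derive it as a direct rescaling of the $\delta=\pi$ orthonormal basis used in Lemma \ref{mujframe} rather than quote it as folklore. The coset decomposition and the elementary inequality that follows are what produce the ceiling $\lceil\lambda/\pi\rceil$ in the bound.
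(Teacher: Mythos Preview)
Your argument is correct, and the normalization check you flag is indeed the only delicate point; the scaling $\sum_{k}|g(k/N)|^2=N^d\|g\|_{L^2}^2$ for $g\in\cB_{N\pi}(\bR^d)$ is exactly right, and the coset decomposition of $\frac1N\bZ^d$ modulo $\bZ^d$ then gives (\ref{bessel}) immediately.

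The paper takes a genuinely different route. It works on the Fourier side: writing $\varphi(x-j)=\langle\varphi(x-\cdot),K(j,\cdot)\rangle$, it proves that $\{K(j,\cdot):j\in\bZ^d\}$ is a Bessel sequence in $\cB_\lambda(\bR^d)$ with bound $\lceil\lambda/\pi\rceil^{d/2}$ via the synthesis-operator characterization (Lemma~\ref{besselcharacter}). The key inequality is that the integral over $[-\lambda,\lambda]^d$ of the $2\pi$-periodic function $|\sum_jc_je^{-i\langle j,\xi\rangle}|^2$ is at most $\lceil\lambda/\pi\rceil^d$ times its integral over $[-\pi,\pi]^d$, which is $\|c\|_{\ell^2}^2$. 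Your coset decomposition of $\frac1N\bZ^d$ is essentially the spatial dual of the paper's tiling of $[-\lambda,\lambda]^d$ by period cubes, so the two arguments are related by Fourier duality but are organized quite differently. Your approach is more self-contained in this context, since it reuses only the Parseval identity already invoked in Lemma~\ref{mujframe} and avoids the frame-theoretic detour through Lemma~\ref{besselcharacter}; the paper's approach, on the other hand, fits naturally into the Bessel-sequence language it wants to use anyway and would adapt more readily to non-lattice sampling sets.
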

\begin{proof}
We observe that for all $f\in\cB_\lambda(\bR^d)$ and $x\in\bR^d$
$$
f(x)=\langle f, K(x,\cdot)\rangle_{L^2(\bR^d)},
$$
where
$$
(K(x,\cdot))\hat{\,}(\xi):=\frac1{(\sqrt{2\pi})^d}e^{-i\langle x,\xi\rangle},\ \ \xi\in\bR^d.
$$
Thus,
$$
\sum_{j\in\bZ^d}|\varphi(x-j)|^2=\sum_{j\in\bZ^d}\Big|\langle \varphi(x-\cdot),K(j,\cdot)\rangle_{L^2(\bR^d)}\Big|^2.
$$
Therefore, (\ref{bessel}) can be confirmed by showing that $\{K(j,\cdot):j\in\bZ^d\}$ is a Bessel sequence in $\cB_\lambda(\bR^d)$ with Bessel bound $\lceil\frac\lambda \pi\rceil^{d/2}$. By Lemma \ref{besselcharacter}, it suffices to show that for all $c\in\ell^2(\bZ^d)$,
\begin{equation}\label{phibesseleq1}
\Big\|\sum_{j\in\bZ^d}c_jK(j,\cdot)\Big\|_{L^2(\bR^d)}^2\le \lceil\frac\lambda \pi\rceil^d \|c\|_{\ell^2}^2.
\end{equation}
To this end, we get by the Plancherel identity for the Fourier transform
$$
\Big\|\sum_{j\in\bZ^d}c_jK(j,\cdot)\Big\|_{L^2(\bR^d)}^2=\int_{[-\lambda,\lambda]^d}\Big|\sum_{j\in\bZ^d}c_j\frac1{(\sqrt{2\pi})^d}e^{-i\langle j,\xi\rangle}\Big|^2d\xi\le \lceil\frac\lambda \pi\rceil^d \int_{[-\pi,\pi]^d}\Big|\sum_{j\in\bZ^d}c_j\frac1{(\sqrt{2\pi})^d}e^{-i\langle j,\xi\rangle}\Big|^2d\xi.
$$
By the elementary fact that $\frac1{(\sqrt{2\pi})^d}e^{-i\langle j,\xi\rangle}$, $j\in\bZ^d$ form an orthonormal basis for $L^2([-\pi,\pi]^d)$,
$$
\int_{[-\pi,\pi]^d}\Big|\sum_{j\in\bZ^d}c_j\frac1{(\sqrt{2\pi})^d}e^{-i\langle j,\xi\rangle}\Big|^2d\xi=\|c\|_{\ell^2}^2.
$$
Combining the above two equations proves (\ref{phibesseleq1}) and completes the proof.
\end{proof}

We shall choose $\Phi\in \cB_{2\pi-\delta}(\bR^d)$. By Lemmas \ref{mujframe} and \ref{phibessel}, we get for all $f\in\cB_\delta(\bR^d)$ and $x\in\bR^d$
$$
\sum_{j\in\bZ^d}|\mu_j(f)\Phi(x-j)|\le \Big(\sum_{j\in\bZ^d}|\mu_j(f)|^2\Big)^{1/2}\Big(\sum_{j\in\bZ^d}|\Phi(x-j)|^2\Big)^{1/2}\le 2^{d/2}\|f\|_{L^2(\bR^d)}\|\Phi\|_{L^2(\bR^d)}.
$$
Therefore, the series in (\ref{dualframeexpansion}) converges absolutely. To ensure that it does equal $f$, we have the following necessary and sufficient condition.
\begin{lemma}\label{average1}
Let $\Phi\in \cB_{2\pi-\delta}(\bR^d)$. Then the identity (\ref{dualframeexpansion}) holds both pointwise and in $L^2(\bR^d)$ for all $f\in \cB_\delta(\bR^d)$ if and only if
\begin{equation}\label{ifandonlyif}
\hat{\Phi}(\xi)U(\xi)=1, \ \ \ \mbox{for almost every} \ \ \xi\in [-\delta, \delta]^d.
\end{equation}
\end{lemma}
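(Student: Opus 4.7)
The approach is to take the Fourier transform of both sides of (\ref{dualframeexpansion}) and reduce the equivalence to an a.e.\ pointwise identity on $[-\delta,\delta]^d$. First I would evaluate $\mu_j(f)$ via Fourier inversion: since $\hat{f}\in L^1\cap L^2(\bR^d)$ has $\supp\hat{f}\subseteq[-\delta,\delta]^d\subset[-\pi,\pi]^d$, Fubini's theorem applied to the definition of $\mu_j(f)$ gives
\begin{equation*}
\mu_j(f)=\frac{1}{(\sqrt{2\pi})^d}\int_{[-\pi,\pi]^d}\hat{f}(\eta)U(\eta)e^{i\langle j,\eta\rangle}d\eta,
\end{equation*}
which identifies $\mu_j(f)$ as the Fourier coefficient of the $L^2([-\pi,\pi]^d)$ function $\hat{f}U$ against the orthonormal basis $\{(2\pi)^{-d/2}e^{-i\langle j,\cdot\rangle}:j\in\bZ^d\}$. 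Consequently $(2\pi)^{-d/2}\sum_{j\in\bZ^d}\mu_j(f)e^{-i\langle j,\xi\rangle}$ converges in $L^2([-\pi,\pi]^d)$ to $\hat{f}U$, and on $\bR^d$ to the $2\pi$-periodic extension $G$ of $\hat{f}U$.

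Next, I would compute the Fourier transform of the right-hand side of (\ref{dualframeexpansion}). The absolute-convergence estimate established just before Lemma \ref{average1} (together with Lemma \ref{besselcharacter}) ensures convergence in $L^2(\bR^d)$, so the Fourier transform can be applied termwise, and $(\Phi(\cdot-j))\hat{\,}(\xi)=e^{-i\langle j,\xi\rangle}\hat{\Phi}(\xi)$ yields $\hat{\Phi}(\xi)G(\xi)$. The crucial geometric observation is that on $\supp\hat{\Phi}\subseteq[-(2\pi-\delta),2\pi-\delta]^d$ the periodization $G$ coincides almost everywhere with $\hat{f}U$: for any $k\in\bZ^d\setminus\{0\}$, the translate $([-\delta,\delta]^d+2\pi k)\cap[-(2\pi-\delta),2\pi-\delta]^d$ lies in the boundary and therefore has Lebesgue measure zero. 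Hence the Fourier transform of the right-hand side of (\ref{dualframeexpansion}) equals $\hat{\Phi}(\xi)U(\xi)\hat{f}(\xi)$ a.e.\ on $\bR^d$.

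Equating with $\hat{f}$, the $L^2(\bR^d)$ version of (\ref{dualframeexpansion}) for every $f\in\cB_\delta(\bR^d)$ is equivalent to $(\hat{\Phi}U-1)\hat{f}=0$ a.e.\ on $[-\delta,\delta]^d$ for every such $f$; taking $\hat{f}$ to be the indicator of an arbitrary measurable subset of $[-\delta,\delta]^d$ forces (\ref{ifandonlyif}), and the converse direction is immediate from the same computation. The pointwise version then follows because both sides are continuous in $x$---the function $f$ by assumption, and the series on the right by the uniform Cauchy-Schwarz bound derived from Lemmas \ref{mujframe} and \ref{phibessel}---so an $L^2$-equality between continuous functions is an everywhere equality. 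The main obstacle in this plan is the periodization step: I need to argue carefully that $\hat{\Phi}\cdot G$ receives no contribution from the $k\neq 0$ copies of $\hat{f}U$, and this is precisely where the bandwidth choice $2\pi-\delta$ enters as the threshold past which aliasing from neighboring periods would occur.
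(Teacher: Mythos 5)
Your proof is correct and follows essentially the same route as the paper's: both identify $(2\pi)^{-d/2}\sum_{j}\mu_j(f)e^{-i\langle j,\xi\rangle}$ with the $2\pi$-periodization of $\hat{f}U$, obtain $\hat{g}=\hat{\Phi}\cdot(\hat{f}U)_{2\pi}$ for the $L^2$-limit of the right-hand side of (\ref{dualframeexpansion}), and reduce the identity to the a.e.\ condition (\ref{ifandonlyif}) on $[-\delta,\delta]^d$, with pointwise equality following from continuity. Your explicit check that the $k\neq0$ aliases $[-\delta,\delta]^d+2\pi k$ meet $[-(2\pi-\delta),2\pi-\delta]^d$ only in a Lebesgue-null set is a detail the paper leaves implicit, but it is the same argument.
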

\begin{proof}
Let $\Phi\in \cB_{2\pi-\delta}(\bR^d)$ and $f\in\cB_\delta(\bR^d)$. We see that the right hand side of \eqref{dualframeexpansion} converges in $L^2(\bR^d)$ to some $g\in \cB_{2\pi-\delta}(\bR^d)$ with the Fourier transform
$$
\hat{g}(\xi)=\hat{\Phi}(\xi)\frac{1}{(\sqrt{2\pi})^d}\sum_{j\in\bZ^d}\mu_j(f)e^{-i\langle j,\xi\rangle}=\hat{\Phi}(\xi)\int_{[-\frac\sigma2,\frac\sigma2]^d}\bigg(\frac{1}{(\sqrt{2\pi})^d}\sum_{j\in \bZ^d}f(t+j)e^{-i\langle j, \xi\rangle}\bigg)d\nu(t).
$$
Note that $\frac{1}{(\sqrt{2\pi})^d}\sum_{j\in \bZ^d}f(t+j)e^{-i\langle j, \xi\rangle}$ is the expansion of $\hat{f}(\cdot)e^{i\langle t,\cdot\rangle}$ with respect to the orthonormal basis $\big\{\frac{1}{(\sqrt{2\pi})^d}e^{-i\langle j, \xi\rangle}:j\in \bZ^d\big\}$ in $L^2([-\pi, \pi]^d)$. Consequently,
$$
\hat{g}(\xi)=\hat{\Phi}(\xi)(\hat{f}(\xi)U(\xi))_{2\pi}, \ \ \ \xi\in \bR^d,
$$
where the subindex $2\pi$ stands for the $2\pi$-periodic extension of a function originally defined only within $[-\pi, \pi]^d$. Thus, $\hat{g}$ equals $\hat{f}$ for all $f\in \mathcal{B}_{\delta}(\bR^d)$ if and only if \eqref{ifandonlyif} holds true. When (\ref{ifandonlyif}) is satisfied, as both sides in (\ref{dualframeexpansion}) are continuous functions on $\bR^d$, they also equal pointwise.
\end{proof}

Let $\Phi\in\cB_{2\pi-\delta}(\bR^d)$ satisfy (\ref{ifandonlyif}). Our method to reconstruct the values of a function $f\in\cB_\delta(\bR^d)$ on $(0,1)^d$ from its local finite average sample data $\{\mu_j(f): j\in J_n\}$ is directly given by
\begin{equation}\label{algorithm1}
(\cA_nf)(x):=\frac{1}{(\sqrt{2\pi})^d}\sum_{j\in J_n}\mu_j(f)\Phi(x-j), \ \ \ x\in(0, 1)^d.
\end{equation}

We have the following initial analysis of the approximation error for this reconstruction method.
\begin{proposition}\label{average16}
Let $\Phi\in\cB_{2\pi-\delta}(\bR^d)$ satisfy (\ref{ifandonlyif}). Then it holds for all $f\in\mathcal {B}_\delta(\bR^d)$ and $x\in(0,1)^d$,
\begin{equation}\label{abstractestimate}
|f(x)-(\mathcal {A}_nf)(x)|\leq\frac{1}{(\sqrt{2\pi})^d}\|f\|_{L^2(\bR^d)}\Big(\sum_{j\in\bZ^d\setminus J_n}|\Phi(x-j)|^2\Big)^{1/2}.
\end{equation}
\end{proposition}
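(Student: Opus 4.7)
The plan is to combine the complete reconstruction formula from Lemma \ref{average1} with the Bessel-type bound from Lemma \ref{mujframe}, then split the series at level $n$ and invoke Cauchy--Schwartz.

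First I would apply Lemma \ref{average1}: since $\Phi \in \cB_{2\pi-\delta}(\bR^d)$ satisfies \eqref{ifandonlyif}, the identity \eqref{dualframeexpansion} holds pointwise for every $f\in\cB_\delta(\bR^d)$ and every $x\in\bR^d$. Subtracting the truncated expression $(\cA_n f)(x)$ defined in \eqref{algorithm1} from $f(x)$, the difference collapses to the tail
$$
f(x)-(\cA_n f)(x)=\frac{1}{(\sqrt{2\pi})^d}\sum_{j\in\bZ^d\setminus J_n}\mu_j(f)\,\Phi(x-j).
$$
The absolute convergence of this tail is not in question because it was already shown (just before Lemma \ref{average1}) that the full series converges absolutely with bound $2^{d/2}\|f\|_{L^2(\bR^d)}\|\Phi\|_{L^2(\bR^d)}$.

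Next I would apply the Cauchy--Schwartz inequality to the tail to get
$$
|f(x)-(\cA_n f)(x)|\le \frac{1}{(\sqrt{2\pi})^d}\Big(\sum_{j\in\bZ^d\setminus J_n}|\mu_j(f)|^2\Big)^{1/2}\Big(\sum_{j\in\bZ^d\setminus J_n}|\Phi(x-j)|^2\Big)^{1/2}.
$$
The first factor is bounded by $\|f\|_{L^2(\bR^d)}$ by Lemma \ref{mujframe}, which provides the global bound $\sum_{j\in\bZ^d}|\mu_j(f)|^2\le\|f\|_{L^2(\bR^d)}^2$ and hence trivially bounds the partial sum over $\bZ^d\setminus J_n$. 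Plugging this in yields exactly \eqref{abstractestimate}.

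There is essentially no obstacle here: the proof is a short assembly of the two preceding lemmas with Cauchy--Schwartz. The only item that warrants a sentence of justification is the legitimacy of subtracting the finite sum $\cA_n f$ term-by-term from the infinite series representation of $f$, which is immediate from the pointwise absolute convergence guaranteed by Lemmas \ref{mujframe} and \ref{phibessel}. The real analytic work of the paper is deferred to estimating $\big(\sum_{j\in\bZ^d\setminus J_n}|\Phi(x-j)|^2\big)^{1/2}$, which will be addressed in the later sections through a careful construction of $\Phi$ with rapidly decaying inverse Fourier transform.
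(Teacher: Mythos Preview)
Your proof is correct and follows essentially the same approach as the paper: subtract the truncated sum from the pointwise identity \eqref{dualframeexpansion} supplied by Lemma \ref{average1}, apply Cauchy--Schwartz to the tail, and bound the $\ell^2$-norm of the sampling data by $\|f\|_{L^2(\bR^d)}$ via Lemma \ref{mujframe}. Your additional remarks on absolute convergence are fine but not strictly needed beyond what the paper already established.
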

\begin{proof}
Under the assumptions, (\ref{dualframeexpansion}) holds pointwise. Thus, for all $x\in (0,1)^d$,
$$
f(x)-(\cA_nf)(x)=\frac{1}{(\sqrt{2\pi})^d}\sum_{j\in\bZ^d\setminus J_n}\mu_{j}(f)\Phi(x-j).
$$
Applying the Cauchy-Schwartz inequality and the inequality \eqref{frameinequality} gives
$$
\begin{array}{ll}
|f(x)-(\cA_nf)(x)|&\displaystyle{\le \frac{1}{(\sqrt{2\pi})^d}\Big(\sum_{j\in \bZ^d\setminus J_n}|\mu_j(f)|^2\Big)^{1/2}\Big(\sum_{j\in\bZ^d\setminus J_n}|\Phi(x-j)|^2\Big)^{1/2}}\\
&\displaystyle{\leq\frac{1}{(\sqrt{2\pi})^d}\|f\|_{L^2(\bR^d)}\Big(\sum_{j\in\bZ^d\setminus J_n}|\Phi(x-j)|^2\Big)^{1/2}},
\end{array}
$$
as desired.
\end{proof}

By (\ref{abstractestimate}), to have an exponentially decaying approximation error, we should choose a $\Phi$ that decays really fast at infinity. We shall make use of a well-known relation between derivatives and the Fourier transform. For a multi-index $\alpha=(\alpha_1,\alpha_2,\cdots,\alpha_d)\in\bZ_+^d$, we set $|\alpha|:=\sum_{l=1}^d\alpha_l$ and denote by $D^\alpha$ the following differential operator
$$
D^\alpha=\frac{\partial^\alpha}{\partial x^\alpha}=\frac{\partial^{|\alpha|}}{\partial x_1^{\alpha_1}\cdots\partial x_d^{\alpha_d}}.
$$
For a multivariate polynomial
$$
P(x)=\sum_{\alpha}c_\alpha x^\alpha,
$$
we set
$$
P(D):=\sum_{\alpha}c_\alpha D^\alpha.
$$
Suppose that $\hat{\Phi}$ has sufficient regularity on $\bR^d$. Then it is well-known that
$$
\frac{1}{(\sqrt{2\pi})^d}\int_{[-2\pi+\delta,2\pi-\delta]^d}(P(D)\hat{\Phi})(\xi)e^{i\langle x-j,\xi\rangle}d\xi=P(i(j-x))\Phi(x-j).
$$
As a consequence,
\begin{equation}\label{usefulineq}
|\Phi(x-j)|\le\frac{1}{(\sqrt{2\pi})^d}\frac{\big\|P(D)\hat{\Phi}\big\|_{L^1([-2\pi+\delta,2\pi-\delta]^d)}}{|P(i(j-x))|}, \ \ \ j\in\bZ^d, \ x\in(0,1)^d.
\end{equation}
In conclusion, the key problem in our approach is to minimize for an appropriate differential operator $P(D)$ the quantity
\begin{equation}\label{L1norm}
\|P(D)\hat{\Phi}\big\|_{L^1([-2\pi+\delta,2\pi-\delta]^d)}
\end{equation}
subject to the complete reconstruction condition
\begin{equation}\label{ifandonlyif2}
\hat{\Phi}(\xi)=\frac1 {U(\xi)},\ \ \xi\in [-\delta,\delta]^d
\end{equation}
and that $\hat{\Phi}\in\cB_{2\pi-\delta}(\bR^d)$ has certain regularity on $\bR^d$. This minimization problem is hard to solve. When $d=1$, one only has to handle two conjunction points $-\delta$ and $\delta$ in extending $1/U$ smoothly from $[-\delta,\delta]$ to $[-2\pi+\delta,2\pi-\delta]$. In this case, \cite{HZ} gave an suboptimal solution by relaxing the $L^1$-norm in (\ref{L1norm}) to $L^2$-norm. An exponentially decaying approximation error was then obtained therein. In this note, we do not attempt to solve (\ref{L1norm}) either. Instead, we shall carefully extend $1/U$ to guarantee an exponentially decaying approximation error. When the measure $\nu$ is separated, the extension method in \cite{HZ} can be used via a tensor product form. This will be briefly discussed in Section 4. Our main concern is with a general sampling probability measure. The construction in \cite{HZ} does not work in this case. We present our extension method in the next section.

\section{General Sampling Probability Measures}
\setcounter{equation}{0}
Throughout this section, we let $d\ge 2$ and $\nu$ be a general sampling probability measure on $[-\frac\sigma2,\frac\sigma2]^d$. We assume that
\begin{equation}\label{widthcondition1}
(2\pi-\delta)\sigma d<\pi.
\end{equation}
Under this assumption, the crucial exponential function
$$
U(\xi)=\int_{[-\frac\sigma2,\frac\sigma2]^d}e^{i\langle t,\xi\rangle}d\nu(t)
$$
satisfies
\begin{equation}\label{gammacondition}
0<\gamma:=\cos\frac{(2\pi-\delta)\sigma d}{2}\le\int_{[-\frac\sigma2,\frac\sigma2]^d}\cos\langle\xi,x\rangle d\nu(x)\le|U(\xi)|\le1,\ \ \xi\in[-2\pi+\delta, 2\pi-\delta]^d.
\end{equation}

To extend $1/U$ from $I_\delta=[-\delta,\delta]^d$ to a smooth function on $\bR^d$ that is supported on $I_{2\pi-\delta}=[-2\pi+\delta,2\pi-\delta]^d$, our idea is to multiply $1/U$ by a smooth function that is identically equal to $1$ on $I_\delta$ and vanishes outside $I_{2\pi-\delta}$. The following choice will work for our purpose:
\begin{equation}\label{funV}
V(\xi):=\prod_{l=1}^dV_k(\xi_l), \ \ \ \xi=(\xi_1, \cdots,\xi_d)\in\bR^d,
\end{equation}
where
\begin{equation}\label{Vk}
V_k(s):=\left\{\begin{array}{cc}\displaystyle{d_k\int_{|s|}^{2\pi-\delta}\sin^{2k}\bigg(\frac{\pi(t-\delta)}
{2\pi-2\delta}\bigg)dt},&\quad\delta<|s|\le2\pi-\delta,\\
1,&\quad|s|\le\delta,\\
0, &\quad\mbox{otherwise,}\end{array}\right.
\end{equation}
and $d_k$ is chosen so that
\begin{equation}\label{dkconstant}
d_k\int_{\delta}^{2\pi-\delta}\sin^{2k}\bigg(\frac{\pi(t-\delta)}{2\pi-2\delta}\bigg)dt=1.
\end{equation}
Here, $k\in\bN$ represents the regularity order of $V$ that is to be optimally chosen. Our reconstruction function $\Phi$ is then determined by
\begin{equation}\label{Phik}
\hat{\Phi}(\xi)=\frac{V(\xi)}{U(\xi)}, \ \ \ \xi\in\bR^d.
\end{equation}
By our construction (\ref{Vk}) and (\ref{dkconstant}), $\Phi$ satisfies the complete reconstruction condition (\ref{ifandonlyif2}) and is $2k$-times continuously differentiable with respect to each of its variables.

We shall estimate the approximation error according to (\ref{abstractestimate}), where $|\Phi(x-j)|$ will be bounded by (\ref{usefulineq}). The differential operator $P(D)$ is set as
\begin{equation}\label{PD}
P(D):=(\frac{\partial^2}{\partial \xi_1^2}+\frac{\partial^2}{\partial \xi_2^2}+\cdots \frac{\partial^2}{\partial \xi_d^2})^k.
\end{equation}
Toward this purpose, we need to bound the $L^1$-norm of $P(D)\hat{\Phi}$. Several lemmas are needed. The first two of them will be used to bound the $L^\infty$-norm of the derivatives of each $V_l$, $1\le l\le d$.

\begin{lemma}\label{dkestimate}
It holds for all $k\in\bN$ and $\delta\in(0,\pi)$
\begin{equation}\label{dkestimateeq}
d_k\le\frac{\sqrt{2k}}{\pi-\delta}.
\end{equation}
\end{lemma}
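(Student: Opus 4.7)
The plan is to reduce the defining identity for $d_k$ to a classical integral and then extract the desired bound from the Wallis formula. First I would change variables by $u=\pi(t-\delta)/(2\pi-2\delta)$, which sends the interval $[\delta,2\pi-\delta]$ onto $[0,\pi]$ with $dt=\tfrac{2(\pi-\delta)}{\pi}\,du$. The defining condition (\ref{dkconstant}) then becomes
$$
d_k\cdot\frac{2(\pi-\delta)}{\pi}\int_0^{\pi}\sin^{2k}u\,du=1.
$$
Hence the problem is entirely reduced to estimating $\int_0^{\pi}\sin^{2k}u\,du$ from below.

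Next I would invoke the Wallis identity $\int_0^{\pi/2}\sin^{2k}u\,du=\frac{\pi}{2}\cdot\frac{\binom{2k}{k}}{4^k}$ (easily obtained by the standard recursion $I_n=\frac{n-1}{n}I_{n-2}$), which by symmetry of $\sin^{2k}$ about $\pi/2$ yields $\int_0^\pi\sin^{2k}u\,du=\pi\binom{2k}{k}/4^k$. Plugging this back produces the closed-form expression
$$
d_k=\frac{4^k}{2(\pi-\delta)\binom{2k}{k}}.
$$
Therefore (\ref{dkestimateeq}) reduces to the purely combinatorial inequality
$$
\binom{2k}{k}\ge\frac{4^k}{2\sqrt{2k}}, \qquad k\in\bN.
$$

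The last step will be the main obstacle, since one cannot read it off directly from the trivial bound $\binom{2k}{k}\le 4^k$ and one prefers to avoid Stirling. I would prove the slightly stronger inequality $\binom{2k}{k}\ge\frac{4^k}{2\sqrt{k}}$ by induction on $k$. Setting $a_k:=\binom{2k}{k}\cdot\frac{2\sqrt{k}}{4^k}$ one has $a_1=1$ and
$$
\frac{a_{k+1}}{a_k}=\frac{(2k+1)(2k+2)}{(k+1)^2}\cdot\frac{\sqrt{k+1}}{4\sqrt{k}}=\frac{2k+1}{2\sqrt{k(k+1)}},
$$
which is $\ge 1$ because $(2k+1)^2=4k(k+1)+1\ge 4k(k+1)$. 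Hence $a_k\ge 1$ for all $k\ge 1$, which gives the bound $\binom{2k}{k}\ge 4^k/(2\sqrt{k})\ge 4^k/(2\sqrt{2k})$, and substituting back yields $d_k\le\sqrt{k}/(\pi-\delta)\le\sqrt{2k}/(\pi-\delta)$, establishing (\ref{dkestimateeq}).
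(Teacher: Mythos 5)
Your proof is correct. The opening reduction is the same as the paper's: the identical change of variables $u=\pi(t-\delta)/(2\pi-2\delta)$ brings the normalization condition to $d_k=\frac{\pi}{2(\pi-\delta)}\big(\int_0^\pi\sin^{2k}u\,du\big)^{-1}$, and the Wallis recursion $I_n=\frac{n-1}{n}I_{n-2}$ is used in both arguments, so both proofs arrive at the same quantity to estimate, namely $\prod_{j=1}^k\frac{2j}{2j-1}=4^k/\binom{2k}{k}$. Where you genuinely diverge is in bounding this product: the paper takes logarithms, uses $\ln(1+1/x)<1/x$, and compares the resulting sum $\sum_{j=2}^k\frac{1}{2j-1}$ with $\frac12\int_1^{2k-1}\frac{dx}{x}$ to get the bound $2\sqrt{2k-1}$, whereas you recast the estimate as the central-binomial lower bound $\binom{2k}{k}\ge 4^k/(2\sqrt{k})$ and prove it by showing the normalized sequence $a_k=\binom{2k}{k}\cdot 2\sqrt{k}/4^k$ is nondecreasing from $a_1=1$, via $(2k+1)^2\ge 4k(k+1)$. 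Your route avoids the log/integral comparison entirely, is purely algebraic at the final step, and in fact yields the slightly sharper conclusion $d_k\le\sqrt{k}/(\pi-\delta)$; the paper's version generalizes a little more readily to products that are not exactly Wallis products. Either way the stated inequality follows.
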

\begin{proof}
A change of variables leads to
$$
\int_{\delta}^{2\pi-\delta}\sin^{2k}\left(\frac{\pi(t-\delta)}
{2\pi-2\delta}\right)dt=\frac{2\pi-2\delta}\pi\int_0^\pi\sin^{2k}tdt.
$$
Let
$$
b_{k}:=\int_0^\pi\sin^{2k}tdt
$$
and observe
$$
b_0=\pi,\ \  b_{k}=\frac{2k-1}{2k}b_{k-1}, \ \  k\geq 1.
$$
Hence, for $k\ge 1$,
$$
d_{k}=\frac1{2\pi-2\delta}\prod_{j=1}^k\frac{2j}{2j-1}.
$$
Since for $x>0$
$$
\ln\left(1+\frac1x\right)<\frac1x,
$$
we get that for $k\ge 1$
$$\ln\left(\prod_{j=1}^k\frac{2j}{2j-1}\right)=\displaystyle{\sum_{j=1}^k\ln\left(1+\frac1{2j-1}\right)}
\le\displaystyle{\ln2+\sum_{j=2}^k\frac1{2j-1}}\le
\displaystyle{\ln2+\frac12\int_1^{2k-1}\frac1xdx}=\displaystyle{\ln(2\sqrt{2k-1})}.
$$
The result of this lemma follows directly.
\end{proof}

\begin{lemma}\label{coro}
\label{sin}
It holds
\begin{equation}\label{coroinequality}
\left|\left(\sin^{2k}t\right)^{(j)}\right|\le 2^jk^j, \ \ t\in {\Bbb R},\ \ j\in\bZ_+.
\end{equation}
\end{lemma}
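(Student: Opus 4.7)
The plan is to expand $\sin^{2k}t$ as a trigonometric polynomial of degree $2k$ by applying the binomial theorem to Euler's formula, then differentiate term by term and take absolute values. Concretely, writing $\sin t = (e^{it}-e^{-it})/(2i)$ and noting $(2i)^{2k} = (-1)^k 4^k$, the binomial theorem yields
$$
\sin^{2k}t = \frac{(-1)^k}{4^k}\sum_{m=0}^{2k}(-1)^m \binom{2k}{m}e^{i(2m-2k)t}.
$$
This representation is well suited for differentiation because each summand is a pure exponential.

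Next I would differentiate $j$ times under the finite sum to obtain
$$
(\sin^{2k}t)^{(j)} = \frac{(-1)^k}{4^k}\sum_{m=0}^{2k}(-1)^m \binom{2k}{m}\bigl(i(2m-2k)\bigr)^{j}e^{i(2m-2k)t}.
$$
Taking moduli, using $|e^{i(2m-2k)t}|=1$ and the key elementary estimate $|2m-2k|\le 2k$ for $0\le m\le 2k$, each frequency factor satisfies $|2m-2k|^{j}\le (2k)^{j}=2^{j}k^{j}$.

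Finally I would combine these bounds with the identity $\sum_{m=0}^{2k}\binom{2k}{m}=4^{k}$ to cancel the normalizing factor $4^{-k}$, which produces
$$
\bigl|(\sin^{2k}t)^{(j)}\bigr| \le \frac{1}{4^{k}}\cdot 2^{j}k^{j}\sum_{m=0}^{2k}\binom{2k}{m} = 2^{j}k^{j},
$$
as claimed. There is no real obstacle here: the only choice is how to expand $\sin^{2k}t$, and the complex-exponential form diagonalizes the differentiation operator, making the bound $|2m-2k|\le 2k$ precisely tight for the desired $2^{j}k^{j}$ estimate. An alternative route by induction on $j$, using $\frac{d}{dt}\sin^{2k}t=2k\sin^{2k-1}t\cos t$, would require tracking how the derivatives of $\sin^{2k-\ell}t\,\cos^\ell t$ propagate and would not give the clean constant as directly.
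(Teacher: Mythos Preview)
Your proof is correct. The Fourier expansion is right, the termwise differentiation is valid for a finite sum, the frequency bound $|2m-2k|\le 2k$ is exactly what is needed, and the binomial identity $\sum_{m=0}^{2k}\binom{2k}{m}=2^{2k}=4^k$ cancels the normalization cleanly.

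The paper, however, takes a different and shorter route: it simply invokes Bernstein's inequality, which says that a trigonometric polynomial $g$ of degree $m$ with $\|g\|_\infty\le 1$ satisfies $\|g'\|_\infty\le m$. Since $\sin^{2k}t$ is a trigonometric polynomial of degree $2k$ bounded by $1$, iterating Bernstein gives $\bigl|(\sin^{2k}t)^{(j)}\bigr|\le(2k)^j=2^jk^j$ immediately. Your argument is more elementary and self-contained; it avoids citing a nontrivial classical result and instead exploits the explicit Fourier coefficients. What makes the two approaches agree on the constant is the coincidence that the $\ell^1$-norm of the Fourier coefficients of $\sin^{2k}t$ equals its $L^\infty$-norm (both are $1$): in general your method would yield $m^j\sum|c_n|$ rather than Bernstein's sharper $m^j\|g\|_\infty$, but here they coincide. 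The paper's proof is a one-liner given the cited result; yours would be preferable in a context where Bernstein is not already available.
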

\begin{proof}
We recall the result due to Bernstein (see, \cite{GS}, page 5) that if $g$ is a trigonometric polynomial of the
form
$$
g(t)=a_0+a_1\cos t+b_1\sin t+\cdots+a_m\cos
mt+b_m\sin mt, \ \ t\in \bR
$$
satisfying
$$
\left|g(t)\right|\le1, \ \ t\in \bR
$$
then
$$
\left|g'(t)\right|\le m, \ \ t\in \bR.
$$
The inequality (\ref{coroinequality}) follows immediately from this celebrated fact.
\end{proof}

We next deal with the derivatives of $1/U$.
\begin{lemma}\label{Linfty1}
For all $\alpha\in\bZ_+^d$, it holds
\begin{equation}\label{linftynormU}
\Big\|D^\alpha (\frac1U)\Big\|_{L^\infty([-2\pi+\delta,2\pi-\delta]^d)}\le\frac{\sigma^{|\alpha|}}{2^{|\alpha|}\gamma^{{|\alpha|}+1}}d^{|\alpha|}2^{(d-1)|\alpha|}{\alpha}^{\alpha},
\end{equation}
where
$$
\alpha^\alpha:=\prod_{l=1}^d \alpha_l^{\alpha_l} \mbox{ with }0^0:=1.
$$
\end{lemma}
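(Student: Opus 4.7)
The plan is to prove the bound by induction on $|\alpha|$, with the recursion coming from Leibniz's rule applied to the identity $U\cdot (1/U)=1$. The base case $|\alpha|=0$ reduces to $|1/U|\le 1/\gamma$, which is immediate from \eqref{gammacondition}, and the right-hand side of \eqref{linftynormU} indeed equals $1/\gamma$ at $|\alpha|=0$ under the convention $0^0=1$.

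For the inductive step, differentiating $U\cdot(1/U)=1$ by Leibniz and isolating the $\beta=0$ term yields
\[
U(\xi)\, D^\alpha\!\Bigl(\frac{1}{U}\Bigr)(\xi)=-\sum_{0<\beta\le\alpha}\binom{\alpha}{\beta}(D^\beta U)(\xi)\,D^{\alpha-\beta}\!\Bigl(\frac{1}{U}\Bigr)(\xi).
\]
Invoking $|U|\ge\gamma$ from \eqref{gammacondition} together with the elementary moment bound
\[
|D^\beta U(\xi)|=\Bigl|\!\!\int_{[-\sigma/2,\sigma/2]^d}\!(it)^\beta e^{i\langle t,\xi\rangle}d\nu(t)\Bigr|\le(\sigma/2)^{|\beta|},
\]
and then applying the induction hypothesis to $D^{\alpha-\beta}(1/U)$, I collect powers of $\sigma/2$, $\gamma$, and the constant $c:=d\,2^{d-1}$. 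Using $\gamma^{|\beta|-1}\le 1$ for $|\beta|\ge 1$, the required inductive inequality reduces, after adding the $\beta=0$ contribution $\alpha^\alpha$ to both sides and factoring the resulting multi-sum coordinate by coordinate, to the separable product estimate
\[
\prod_{l=1}^d h(\alpha_l,\gamma/c)\le (1+\gamma)\,\alpha^\alpha,\qquad h(n,r):=\sum_{k=0}^n\binom{n}{k}r^k(n-k)^{n-k}.
\]

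For each univariate factor I first use $(n-k)^{n-k}\le n^{n-k}$ to obtain $h(n,r)\le(n+r)^n$, and then $(1+r/n)^n\le e^r$ gives $h(n,r)\le n^n e^r$; multiplying over $l$ produces $\prod_l h(\alpha_l,\gamma/c)\le\alpha^\alpha e^{d\gamma/c}$. It therefore suffices to verify $e^{d\gamma/c}\le 1+\gamma$, equivalently $d\gamma/c\le\ln(1+\gamma)$. Since $\ln(1+\gamma)/\gamma$ is decreasing on $(0,1]$ with minimum value $\ln 2$ attained at $\gamma=1$, this is implied by $d/c\le\ln 2$; but $d/c=2^{1-d}\le 1/2<\ln 2$ whenever $d\ge 2$, which is exactly the standing hypothesis of this section. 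The main obstacle is purely the algebraic bookkeeping that converts the inductive recursion into the clean separable product inequality above; once the reduction is in place the remaining estimates are elementary, and the hypothesis $d\ge 2$ enters precisely through the comparison $2^{1-d}\le\ln 2$.
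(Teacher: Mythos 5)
Your proof is correct, and while it shares the paper's skeleton --- induction on $|\alpha|$, the Leibniz recursion from $U\cdot(1/U)=1$, the moment bound $|D^\beta U|\le(\sigma/2)^{|\beta|}$, and $|U|\ge\gamma$ --- it closes the induction by a genuinely different combinatorial device. The paper first discards the powers of $\gamma$, $d$, and $2^{d-1}$ carried by the induction hypothesis (replacing $|\beta|$ by $|\alpha|-1$ in them), and is then left to prove the unweighted count $\sum_{\beta<\alpha}\binom{\alpha}{\beta}\beta^\beta\le d\,2^{d-1}\alpha^\alpha$, which it handles by covering $\{\beta<\alpha\}$ with the $d$ sets on which a fixed coordinate strictly drops and using $\sum_{k\le n}\binom{n}{k}(n-1)^k=n^n$ together with $\sum_{k\le n}\binom{n}{k}k^k\le 2n^n$. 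You instead keep the geometric weight $(\gamma/c)^{|\beta|}$ with $c=d2^{d-1}$, so the full multi-index sum factors coordinatewise into $\prod_l h(\alpha_l,\gamma/c)$, and the elementary chain $h(n,r)\le(n+r)^n\le n^n e^r$ reduces everything to $e^{d\gamma/c}\le 1+\gamma$, i.e.\ $2^{1-d}\le\ln(1+\gamma)/\gamma$; this is where you genuinely need the standing assumption $d\ge 2$ of Section 3, which the paper's covering count does not require (for $d=1$ it degenerates to $\sum_{k<n}\binom{n}{k}k^k\le n^n$, and $d2^{d-1}=1$ suffices). Your route avoids breaking the sum by a covering argument and yields the same constant, at the cost of tying the proof to $d\ge 2$. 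One cosmetic remark: the phrase ``using $\gamma^{|\beta|-1}\le 1$'' does not match what you actually do --- you retain those powers of $\gamma$, which is exactly why your target reads $(1+\gamma)\alpha^\alpha$ rather than $2\alpha^\alpha$; both versions of the final inequality hold for $d\ge 2$, so nothing is at stake, but the write-up should pick one.
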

\begin{proof}
Observe
\begin{equation}\label{Uder}
\Big|(D^\alpha U)(\xi)\Big|=\Big|\int_{[-\frac\sigma2,\frac\sigma2]^d}(it_1)^{\alpha_1}\cdots(it_{d})^{\alpha_d}e^{i\langle\xi, t\rangle}d\nu(t)\Big|\le\frac{\sigma^{|\alpha|}}{2^{|\alpha|}},\ \ \xi\in\bR^d.
\end{equation}
Now we prove \eqref{linftynormU} by induction on $|\alpha|$. It is clearly true when $|\alpha|=0$. Suppose that it is true for all $0\le |\alpha|\le m-1$, $m\in\bN$. Now let $|\alpha|=m$. Put $h:=1/U$. Applying $D^\alpha$ to both sides of $1=hU$ and using the Leibniz formula, we have
$$
D^\alpha h=-\frac1U\sum_{\beta\in\bZ_+^d,\beta<\alpha}{\alpha \choose \beta}(D^{\alpha-\beta}U)(D^\beta h)
$$
where
$$
{\alpha \choose \beta}:=\prod_{l=1}^d{{\alpha_l} \choose {\beta_l}}
$$
and $\beta<\alpha$ means that $\beta_l\le \alpha_l$ for $1\le l\le d$ and $\beta\ne \alpha$. Therefore, by our induction, equations \eqref{gammacondition} and \eqref{Uder}, we get
\begin{equation}\label{linftynormUeq1}
|D^\alpha h|\le\frac1\gamma\sum_{\beta<\alpha}{\alpha \choose \beta}\frac{\sigma^{|\alpha|-|\beta|}}{2^{|\alpha|-|\beta|}}\frac{\sigma^{|\beta|}}{2^{|\beta|}\gamma^{{|\beta|}+1}}d^{|\beta|}2^{(d-1)|\beta|}{\beta}^{\beta}\le\
\frac{\sigma^{|\alpha|}d^{|\alpha|-1}}{2^{|\alpha|}\gamma^{{|\alpha|}+1}}2^{(d-1)(|\alpha|-1)}\sum_{\beta<\alpha}{\alpha \choose \beta}{\beta}^{\beta}.
\end{equation}
To continue, note that
$$
\{\beta\in \bZ_+^d:\beta<\alpha\}\subseteq \bigcup_{l=1}^d\{\beta\in\bZ_+^d:\beta_l<\alpha_l,\beta_m\le\alpha_m,m\ne l,1\le m\le d\}.
$$
As a consequence,
\begin{equation}\label{linftynormUeq2}
\sum_{\beta<\alpha}{\alpha \choose \beta}{\beta}^{\beta}\le \sum_{l=1}^d \Big[\sum_{\beta_l=0}^{\alpha_l-1}{{\alpha_l} \choose {\beta_l}}{\beta_l}^{\beta_l}\Big]\Big[\prod_{m\ne l}\sum_{\beta_m=0}^{\alpha_m}{{\alpha_m} \choose {\beta_m}}{\beta_m}^{\beta_m}\Big].
\end{equation}
We estimate
\begin{equation}\label{linftynormUeq3}
\sum_{\beta_l=0}^{\alpha_l-1}{{\alpha_l} \choose {\beta_l}}{\beta_l}^{\beta_l}\le \sum_{\beta_l=0}^{\alpha_l}{{\alpha_l} \choose {\beta_l}}{(\alpha_l-1)}^{\beta_l}=(1+(\alpha_l-1))^{\alpha_l}={\alpha_l}^{\alpha_l}
\end{equation}
and
\begin{equation}\label{linftynormUeq4}
\sum_{\beta_m=0}^{\alpha_m}{{\alpha_m} \choose {\beta_m}}{\beta_m}^{\beta_m}={\alpha_m}^{\alpha_m}+\sum_{\beta_m=0}^{\alpha_m-1}{{\alpha_m} \choose {\beta_m}}{\beta_m}^{\beta_m}\le 2{\alpha_m}^{\alpha_m}.
\end{equation}
Finally, we combine equations \eqref{linftynormUeq1}, \eqref{linftynormUeq2}, \eqref{linftynormUeq3}, and \eqref{linftynormUeq4} to obtain
$$
|D^\alpha h|\le \frac{\sigma^{|\alpha|}}{2^{|\alpha|}\gamma^{{|\alpha|}+1}}d^{|\alpha|-1}2^{(d-1)(|\alpha|-1)}d 2^{d-1}\alpha^{\alpha},
$$
which confirms (\ref{linftynormU}).
\end{proof}

We need one more preparation in order to bound the $L^1$-norm of $P(D)\hat{\Phi}$ on $[-2\pi+\delta,2\pi-\delta]^d$.

\begin{lemma}\label{Linfty2}
It holds for each $\alpha\in\bZ_+^d$
\begin{equation}\label{linftynormUV}
\Big\|D^\alpha \hat{\Phi}\Big\|_{L^\infty([-2\pi+\delta,2\pi-\delta]^d)}\le\frac{(1+\lambda)^d}\gamma\Big(
\frac{\sqrt2}{\pi\sqrt{k}}\Big)^d\prod_{l=1}^d\Big(\frac{\sigma d}{2\gamma}2^{d-1}\alpha_l+\frac{\pi k}{\pi-\delta}\Big)^{\alpha_l},
\end{equation}
where
\begin{equation}\label{constantbeta}
\lambda:=\frac{d\sigma 2^{d-5/2}(\pi-\delta)}{\gamma}.
\end{equation}
\end{lemma}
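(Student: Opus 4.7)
The plan is to apply the Leibniz product rule to $\hat\Phi=V/U$, exploit the tensor-product structure of $V$, invoke Lemmas \ref{dkestimate}--\ref{Linfty1} to bound the individual factors, and then collapse the resulting binomial sum into the form (\ref{linftynormUV}). Throughout I abbreviate $A:=\sigma d\,2^{d-2}/\gamma$ and $B:=\pi k/(\pi-\delta)$, so that the target reads $\|D^\alpha\hat\Phi\|_\infty\le \gamma^{-1}(1+\lambda)^d(\sqrt{2}/(\pi\sqrt{k}))^d\prod_{l=1}^d(A\alpha_l+B)^{\alpha_l}$.

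Writing $\hat\Phi=V\cdot(1/U)$ and applying the multivariable Leibniz formula together with the factorization $V(\xi)=\prod_l V_k(\xi_l)$ yields
$$
D^\alpha\hat\Phi(\xi)=\sum_{\beta\le\alpha}\binom{\alpha}{\beta}\Big(\prod_{l=1}^d V_k^{(\beta_l)}(\xi_l)\Big)(D^{\alpha-\beta}(1/U))(\xi).
$$
From the definition (\ref{Vk}), for $\beta_l\ge 1$ and $\delta<|\xi_l|<2\pi-\delta$ one has $V_k^{(\beta_l)}(\xi_l)=-d_k\,\mathrm{sgn}(\xi_l)\cdot[\sin^{2k}]^{(\beta_l-1)}$ evaluated at the linear argument $\pi(|\xi_l|-\delta)/(2(\pi-\delta))$. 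The chain-rule factor $(\pi/(2(\pi-\delta)))^{\beta_l-1}$ combined with Lemma \ref{coro} then gives $|V_k^{(\beta_l)}(\xi_l)|\le d_k B^{\beta_l-1}$, while trivially $|V_k(\xi_l)|\le 1$. Lemma \ref{Linfty1} supplies the companion bound $|D^{\alpha-\beta}(1/U)(\xi)|\le \gamma^{-1}\prod_l A^{\alpha_l-\beta_l}(\alpha_l-\beta_l)^{\alpha_l-\beta_l}$.

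Using the crude estimate $(\alpha_l-\beta_l)^{\alpha_l-\beta_l}\le\alpha_l^{\alpha_l-\beta_l}$, the multi-sum over $\beta\le\alpha$ factors into $d$ independent one-dimensional sums. Isolating the $\beta_l=0$ term and applying the binomial theorem to the rest collapses the $l$-th sum to
$$
\Big(1-\frac{d_k}{B}\Big)(A\alpha_l)^{\alpha_l}+\frac{d_k}{B}(A\alpha_l+B)^{\alpha_l}.
$$

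The hardest part is the final algebra: showing each coordinate-wise factor is at most $(1+\lambda)(\sqrt{2}/(\pi\sqrt{k}))(A\alpha_l+B)^{\alpha_l}$. Using Lemma \ref{dkestimate} in the form $d_k/B\le\sqrt{2}/(\pi\sqrt{k})$ extracts the prefactor $(\sqrt{2}/(\pi\sqrt{k}))^d$ once multiplied over $l$, and the residual term $(1-d_k/B)(A\alpha_l)^{\alpha_l}$ must then be absorbed into $\lambda\,(d_k/B)(A\alpha_l+B)^{\alpha_l}$. The specific value $\lambda=d\sigma\,2^{d-5/2}(\pi-\delta)/\gamma=A(\pi-\delta)/\sqrt{2}$ is calibrated so that the powers of $A$, $B$, and $d_k$ align exactly in this absorption step, with the identity $A(\pi-\delta)\sqrt{2}/(\pi\sqrt{k})=\lambda\cdot\sqrt{2}/(\pi\sqrt{k})$ providing the matching bookkeeping. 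Taking the product over $l=1,\dots,d$ and combining with the overall $\gamma^{-1}$ from Lemma \ref{Linfty1} then yields (\ref{linftynormUV}).
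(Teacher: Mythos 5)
Your route is the same as the paper's: Leibniz applied to $\hat\Phi=V\cdot(1/U)$, the bounds $|V_k|\le 1$ and $|V_k^{(s)}|\le d_kB^{s-1}$ for $s\ge1$ coming from Lemmas \ref{dkestimate} and \ref{coro}, the bound of Lemma \ref{Linfty1} rewritten as $\gamma^{-1}\prod_l A^{\alpha_l-\beta_l}(\alpha_l-\beta_l)^{\alpha_l-\beta_l}$, the crude estimate $(\alpha_l-\beta_l)^{\alpha_l-\beta_l}\le\alpha_l^{\alpha_l-\beta_l}$, the coordinate-wise factorization, and the binomial collapse to $(1-d_k/B)(A\alpha_l)^{\alpha_l}+(d_k/B)(A\alpha_l+B)^{\alpha_l}$. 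The paper does exactly this, merely with the roles of $\beta$ and $\alpha-\beta$ swapped and with $d_k/B$ replaced by its bound $\sqrt2/(\pi\sqrt k)$ one step earlier. All of those component estimates check out.

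The step you yourself flag as the hardest is, however, the one you do not actually carry out, and your ``matching identity'' is wrong as written: since $\lambda=A(\pi-\delta)/\sqrt2$, the claimed identity $A(\pi-\delta)\sqrt2/(\pi\sqrt k)=\lambda\cdot\sqrt2/(\pi\sqrt k)$ is off by a factor of $\sqrt2$. What must be shown is $(A\alpha_l)^{\alpha_l}\le\lambda\cdot\frac{\sqrt2}{\pi\sqrt k}(A\alpha_l+B)^{\alpha_l}$, and for $\alpha_l\ge1$ this follows by retaining only the term $\binom{\alpha_l}{1}(A\alpha_l)^{\alpha_l-1}B$ of the binomial expansion, which gives $\frac{\sqrt2}{\pi\sqrt k}(A\alpha_l+B)^{\alpha_l}\ge\frac{\sqrt2\sqrt k}{A(\pi-\delta)}(A\alpha_l)^{\alpha_l}=\frac{\sqrt k}{\lambda}(A\alpha_l)^{\alpha_l}\ge\frac1\lambda(A\alpha_l)^{\alpha_l}$; you should supply this computation rather than assert that $\lambda$ is ``calibrated''. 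Be aware, too, that for a coordinate with $\alpha_l=0$ the absorption reads $1\le\lambda\sqrt2/(\pi\sqrt k)$, which fails once $k>2\lambda^2/\pi^2$; there the true coordinate factor is $1+\sqrt2/(\pi\sqrt k)$, not $(1+\lambda)\sqrt2/(\pi\sqrt k)$, so the bound (\ref{linftynormUV}) is not literally valid for multi-indices with vanishing components. The paper's own proof has the identical blind spot (the fix only alters the constant in Theorem \ref{maintheorem}, not the exponential rate), but since your whole argument hinges on this absorption, it is precisely where your write-up, like the paper's, needs to be repaired or the statement restricted to $\alpha$ with all $\alpha_l\ge1$.
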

\begin{proof}
Firstly, we recall the definition (\ref{Vk}) and get by inequalities (\ref{dkestimateeq}) and (\ref{coroinequality})
$$
|V_k(t)|\le 1,\ \ |V_k^{(s)}(t)|\le  \frac{\sqrt{2}}{\pi\sqrt{k}}(2k)^s\Big(\frac{\pi}{2\pi-2\delta}\Big)^s\mbox{  for all }1\le s\le 2k,\ t\in [-2\pi+\delta,2\pi-\delta].
$$
It follows that $V(\xi)=\prod_{l=1}^dV_l(\xi_l)$ satisfies
\begin{equation}\label{linftynormUVeq1}
|(D^\beta V)(\xi)|\le \Big(\frac{\sqrt{2}}{\pi\sqrt{k}}\Big)^{\|\beta\|_0}\Big(\frac{\pi k}{\pi-\delta}\Big)^{|\beta|},\ \ \xi\in[-2\pi+\delta,2\pi-\delta]^d,\ \ \beta\in\bZ_+^d,
\end{equation}
where $\|\beta\|_0$ is the $\ell^0$-semi-norm of $\beta$, namely, the number of nonzero components of $\beta$. We then apply the Leibniz formula
$$
D^\alpha \hat{\Phi}=\sum_{\beta\in\bZ_+^d,\beta\le \alpha}{\alpha \choose\beta}\Big(D^\beta \frac1U\Big)D^{\alpha-\beta}V
$$
and estimates (\ref{linftynormU}), (\ref{linftynormUVeq1}) to get
\begin{equation}\label{linftynormUVeq2}
\begin{array}{ll}
\displaystyle{\|D^\alpha \hat{\Phi}\|_{L^\infty([-2\pi+\delta,2\pi-\delta]^d)}}&\displaystyle{\le \sum_{\beta\le \alpha}{\alpha\choose\beta}\frac{\sigma^{|\beta|}}{2^{|\beta|}\gamma^{{|\beta|}+1}}d^{|\beta|}2^{(d-1)|\beta|}{\beta}^{\beta}\Big(\frac{\sqrt{2}}{\pi\sqrt{k}}\Big)^{\|\alpha-\beta\|_0}\Big(\frac{\pi k}{\pi-\delta}\Big)^{|\alpha|-|\beta|}}\\
&=\displaystyle{\frac1\gamma\prod_{l=1}^d
\sum_{j=0}^{\alpha_l}{{\alpha_l}\choose j}
\frac{\sigma^{j}d^{j}}{2^{j}\gamma^{j}}2^{(d-1)j}{j}^{j}\Big(\frac{\sqrt2}{\pi\sqrt{k}}\Big)^{\min(1,\alpha_l-j)}\Big(\frac{\pi k}{\pi-\delta}\Big)^{\alpha_l-j}
}\\
&\le\displaystyle{\frac1\gamma\prod_{l=1}^d\Big(
\frac{\sigma^{\alpha_l}d^{\alpha_l}}{2^{\alpha_l}\gamma^{\alpha_l}}2^{(d-1)\alpha_l}{\alpha_l}^{\alpha_l}+\frac{\sqrt2}{\pi\sqrt{k}}\sum_{j=0}^{\alpha_l}{{\alpha_l}\choose j}
\frac{\sigma^{j}d^{j}}{2^{j}\gamma^{j}}2^{(d-1)j}{\alpha_l}^{j}\Big(\frac{\pi k}{\pi-\delta}\Big)^{\alpha_l-j}
\Big)}\\
&\le\displaystyle{\frac1\gamma\prod_{l=1}^d\Big(
\frac{\sigma^{\alpha_l}d^{\alpha_l}}{2^{\alpha_l}\gamma^{\alpha_l}}2^{(d-1)\alpha_l}{\alpha_l}^{\alpha_l}+
\frac{\sqrt2}{\pi\sqrt{k}}\Big(\frac{\sigma d}{2\gamma}2^{d-1}\alpha_l+\frac{\pi k}{\pi-\delta}\Big)^{\alpha_l}
\Big).}
\end{array}
\end{equation}
Note that
$$
\frac{\sqrt2}{\pi\sqrt{k}}\Big(\frac{\sigma d}{2\gamma}2^{d-1}\alpha_l+\frac{\pi k}{\pi-\delta}\Big)^{\alpha_l}\ge \frac{\sqrt2}{\pi\sqrt{k}}{{\alpha_l}\choose 1} \Big(\frac{\sigma d}{2\gamma}2^{d-1}\alpha_l\Big)^{\alpha_l-1}\frac{\pi k}{\pi-\delta}\ge\frac1\lambda\frac{\sigma^{\alpha_l}d^{\alpha_l}}{2^{\alpha_l}\gamma^{\alpha_l}}2^{(d-1)\alpha_l}{\alpha_l}^{\alpha_l}.
$$
It now follows from this and (\ref{linftynormUVeq2}) the desired inequality (\ref{linftynormUV}).
\end{proof}
\begin{lemma}
Let $P(D)$ be the differential operator given by (\ref{PD}) with $d\in\bN$. It holds
\begin{equation}\label{l1normUV}
\Big\|P(D) \hat{\Phi}\Big\|_{L^1([-2\pi+\delta,2\pi-\delta]^d)}\le\frac{(1+\lambda)^d}\gamma\Big(
\frac{\sqrt2}{\pi\sqrt{k}}\Big)^d(4\pi-2\delta)^dd^kk^{2k}\Big(\frac\pi{\pi-\delta}+2^{d-2}\frac{d\sigma}{\gamma}\Big)^{2k}.
\end{equation}
\end{lemma}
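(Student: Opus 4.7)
The plan is to reduce everything to Lemma~\ref{Linfty2} by expanding the operator $P(D)=\Delta^k$ via the multinomial theorem. Explicitly, I would write
$$
P(D)\hat\Phi \;=\;\sum_{\alpha\in\bZ_+^d,\;|\alpha|=k}\binom{k}{\alpha_1,\ldots,\alpha_d}D^{2\alpha}\hat\Phi,
$$
so that by the triangle inequality
$$
\bigl\|P(D)\hat\Phi\bigr\|_{L^1([-2\pi+\delta,2\pi-\delta]^d)} \;\le\; \sum_{|\alpha|=k}\binom{k}{\alpha_1,\ldots,\alpha_d}\bigl\|D^{2\alpha}\hat\Phi\bigr\|_{L^1([-2\pi+\delta,2\pi-\delta]^d)}.
$$
Since the cube $[-2\pi+\delta,2\pi-\delta]^d$ has Lebesgue measure $(4\pi-2\delta)^d$, I bound the $L^1$-norm on the right by this volume times the $L^\infty$-norm, which will let me invoke Lemma~\ref{Linfty2} directly with the multi-index $2\alpha$ in place of $\alpha$.

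Next I would apply Lemma~\ref{Linfty2} with $2\alpha$. Writing $2\alpha_l$ in the bracketed factor gives
$$
\prod_{l=1}^d\Bigl(\tfrac{\sigma d}{2\gamma}2^{d-1}(2\alpha_l)+\tfrac{\pi k}{\pi-\delta}\Bigr)^{2\alpha_l}
\;=\;\prod_{l=1}^d\Bigl(\tfrac{\sigma d\cdot 2^{d-1}}{\gamma}\,\alpha_l+\tfrac{\pi k}{\pi-\delta}\Bigr)^{2\alpha_l}.
$$
The crucial observation is that $\alpha_l\le k$ for every $l$ whenever $|\alpha|=k$, so each factor is bounded above by $k^{2\alpha_l}\bigl(\tfrac{\pi}{\pi-\delta}+2^{d-1}\tfrac{d\sigma}{\gamma}\bigr)^{2\alpha_l}$. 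Taking the product over $l$ collapses the exponents to $2|\alpha|=2k$, giving a uniform-in-$\alpha$ bound of
$$
k^{2k}\Bigl(\tfrac{\pi}{\pi-\delta}+2^{d-1}\tfrac{d\sigma}{\gamma}\Bigr)^{2k}
$$
(the precise power of $2$ on $d\sigma/\gamma$ will come out of this step; if a tighter argument is needed to reach the stated $2^{d-2}$, one would refine the bound $\alpha_l\le k$ using that on average $\alpha_l=k/d$, combined with an AM--GM or concavity argument applied to the function $\log((A\alpha_l+B)^{2\alpha_l})$, but the looser bound above already gives the same qualitative decay).

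Finally I would collect the constants. The factor $\frac{(1+\lambda)^d}{\gamma}(\sqrt{2}/(\pi\sqrt{k}))^d$ supplied by Lemma~\ref{Linfty2} is independent of $\alpha$, so it pulls outside the sum. The remaining sum of multinomial coefficients evaluates to the classical identity
$$
\sum_{|\alpha|=k}\binom{k}{\alpha_1,\ldots,\alpha_d}=d^k,
$$
which produces the $d^k$ in the conclusion. Multiplying by the volume factor $(4\pi-2\delta)^d$ and assembling all pieces yields \eqref{l1normUV}. The main obstacle is really just the bookkeeping of constants: deciding when to pull $k$ out of the bracketed factor to expose the $k^{2k}$ term, making sure the exponents $2\alpha_l$ sum to $2k$, and verifying that the volume bound together with the multinomial sum recovers the stated $d^k(4\pi-2\delta)^d$ prefactor. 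No analytic subtlety beyond Lemma~\ref{Linfty2} itself is needed.
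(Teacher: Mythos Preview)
Your proposal is correct and follows precisely the same route as the paper: expand $P(D)=\Delta^k$ by the multinomial theorem, bound each $\|D^{2\alpha}\hat\Phi\|_{L^\infty}$ by Lemma~\ref{Linfty2}, replace $\alpha_l$ by $k$ uniformly, sum the multinomial coefficients to $d^k$, and pass to $L^1$ via the volume factor $(4\pi-2\delta)^d$. Regarding the constant you flagged: applying Lemma~\ref{Linfty2} with the multi-index $2\alpha$ indeed gives $\frac{\sigma d}{2\gamma}2^{d-1}(2\alpha_l)$ in the bracket, so after bounding $2\alpha_l\le 2k$ one obtains $2^{d-1}\frac{d\sigma}{\gamma}$ rather than $2^{d-2}\frac{d\sigma}{\gamma}$; the paper's first displayed line simply drops this factor of $2$, so your straightforward computation is the accurate one and no additional AM--GM refinement is needed (or available) to recover the stated $2^{d-2}$.
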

\begin{proof}
By the multinomial theorem,
$$
P(D)\hat{\Phi}=\sum_{\alpha\in\bZ_+^d,|\alpha|=k}\frac{k!}{\alpha_1!\cdots\alpha_d!}\frac{\partial^{2k}\hat{\Phi}}{\partial\xi_1^{2\alpha_1}\cdots\partial\xi_d^{2\alpha_d}}.
$$
By (\ref{linftynormUV}),
$$
\begin{array}{ll}
\displaystyle{\Big\|P(D) \hat{\Phi}\Big\|_{L^\infty([-2\pi+\delta,2\pi-\delta]^d)}}&\displaystyle{\le\frac{(1+\lambda)^d}\gamma\Big(
\frac{\sqrt2}{\pi\sqrt{k}}\Big)^d\sum_{|\alpha|=k}\frac{k!}{\alpha_1!\cdots\alpha_d!}\prod_{l=1}^d\Big(\frac{\sigma d}{2\gamma}2^{d-1}\alpha_l+\frac{\pi k}{\pi-\delta}\Big)^{2\alpha_l}}\\
&\displaystyle{\le\frac{(1+\lambda)^d}\gamma\Big(
\frac{\sqrt2}{\pi\sqrt{k}}\Big)^d\sum_{|\alpha|=k}\frac{k!}{\alpha_1!\cdots\alpha_d!}\prod_{l=1}^d\Big(\frac{\sigma d}{2\gamma}2^{d-1}k+\frac{\pi k}{\pi-\delta}\Big)^{2\alpha_l}}\\
&\displaystyle{=\frac{(1+\lambda)^d}\gamma\Big(
\frac{\sqrt2}{\pi\sqrt{k}}\Big)^dk^{2k}\Big(\frac\pi{\pi-\delta}+2^{d-2}\frac{d\sigma}{\gamma}\Big)^{2k}\sum_{|\alpha|=k}\frac{k!}{\alpha_1!\cdots\alpha_d!}}\\
&\displaystyle{=\frac{(1+\lambda)^d}\gamma\Big(
\frac{\sqrt2}{\pi\sqrt{k}}\Big)^dk^{2k}\Big(\frac\pi{\pi-\delta}+2^{d-2}\frac{d\sigma}{\gamma}\Big)^{2k}d^k}.
\end{array}
$$
The above equation together with
$$
\|P(D) \hat{\Phi}\Big\|_{L^1([-2\pi+\delta,2\pi-\delta]^d)}\le \|P(D) \hat{\Phi}\Big\|_{L^\infty([-2\pi+\delta,2\pi-\delta]^d)}(4\pi-2\delta)^d
$$
proves (\ref{l1normUV}).
\end{proof}

We are finally in a position to present our main result. Recall our assumption $(2\pi-\delta)\sigma d<\pi$ and constants
$$
\gamma=\cos\frac{(2\pi-\delta)\sigma d}{2}>0,\ \ \lambda=\frac{d\sigma 2^{d-5/2}(\pi-\delta)}{\gamma}.
$$
\begin{theorem}\label{maintheorem}
Let $\delta<\pi$, $\nu$ be a probability Borel measure on $[-\frac\sigma2,\frac\sigma2]^d$, $f\in\cB_\delta(\bR^d)$, and $\mu_j(f)$ be the average sampling data defined by (\ref{samplingstrategy2}). Construct weight function $\Phi$ by $\hat{\Phi}=V/U$, where functions $U$, $V$ are given by (\ref{U}) and (\ref{funV}), respectively. Suppose the number $n$ of sampling points satisfies
\begin{equation}\label{bign}
n\ge \frac83+e\rho\max(2,\frac {2d}3),\ \ \rho:=\frac{\sqrt{d}}{2}\Big(\frac\pi{\pi-\delta}+2^{d-2}\frac{d\sigma}{\gamma}\Big).
\end{equation}
Then with the parameter $k$ adaptively chosen according to $n$ as
\begin{equation}\label{optimalk}
k=\lceil \frac {n-2}{2e\rho}\rceil,
\end{equation}
the reconstruction method
$$
(\cA_n f)(x):=\sum_{j\in[-n,n]^d} \mu_j(f)\Phi(x-j),\ x\in(0,1)^d
$$
satisfies the approximation error
\begin{equation}\label{maintheoremeq}
\Big|f(x)-(\cA_n f)(x)\Big|\le\|f\|_{L^2(\bR^d)}\frac{C_{d,\delta,\sigma}}{\sqrt{n}}\exp\Big(-\frac n{e\rho}\Big),\ \ x\in(0,1)^d,
\end{equation}
where
\begin{equation}\label{constantC}
C_{d,\delta,\sigma}:=\sqrt2(e\rho)^{\frac{d+1}2}\exp(2+\frac2{e\rho})\frac{(1+\lambda)^d}\gamma\frac{(4\pi-2\delta)^d}{\pi^{2d}}\sqrt{2^d\omega_{d-1}}
\end{equation}
with $\omega_{d-1}$ denoting the area of the unit sphere in $\bR^d$.
\end{theorem}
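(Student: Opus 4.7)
The plan is to combine the abstract error estimate from Proposition \ref{average16}, the derivative-based pointwise bound \eqref{usefulineq}, and the $L^1$ bound on $P(D)\hat\Phi$ from the preceding lemma, then choose $k$ to balance the competing exponentials. I would first write
$$|f(x)-(\cA_nf)(x)| \le \frac{1}{(\sqrt{2\pi})^d}\,\|f\|_{L^2(\bR^d)}\,\Bigl(\sum_{j\in \bZ^d\setminus J_n}|\Phi(x-j)|^2\Bigr)^{1/2}$$
and apply \eqref{usefulineq} with $P(D)$ equal to the iterated Laplacian as in \eqref{PD}. The key observation is the algebraic identity $P(i(j-x))=(-\|j-x\|^2)^k$, so that $|P(i(j-x))|=\|j-x\|^{2k}$ and hence
$$|\Phi(x-j)|\le \frac{1}{(\sqrt{2\pi})^d}\,\frac{\|P(D)\hat\Phi\|_{L^1([-2\pi+\delta,2\pi-\delta]^d)}}{\|j-x\|^{2k}}.$$

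Next I would estimate the tail lattice sum $\sum_{j\in\bZ^d\setminus J_n}\|j-x\|^{-4k}$ for $x\in(0,1)^d$. Since $x\in(0,1)^d$ forces $\|j-x\|_\infty\ge n$ for $j\notin J_n$, I can enclose each lattice point in its unit cube $j+[-\tfrac12,\tfrac12]^d$, use $\|j-x\|\ge \|y-x\|-\tfrac{\sqrt d}{2}$ on that cube, and pass to an integral in spherical coordinates. This produces a factor $\omega_{d-1}$ and, after a change of variable $s=r-\tfrac{\sqrt d}{2}$, a bound of the form $\tfrac{\omega_{d-1}}{4k-d}(n-c_d)^{d-4k}$ (times harmless constants controlling the correction from $\tfrac{\sqrt d}{2}$). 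The hypothesis \eqref{bign} is exactly what makes $4k-d$ safely positive and makes $n-c_d$ comparable to $n$ at the chosen $k=\lceil(n-2)/(2e\rho)\rceil$.

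Then I would plug in the previously proved $L^1$ bound \eqref{l1normUV}, whose dominant $k$-dependence is $d^k k^{2k}(2\rho/\sqrt d)^{2k}$. After squaring the $|\Phi(x-j)|$ bound, multiplying by the tail-sum estimate, and taking the square root, the $k$-dependent core becomes
$$\bigl(d^{1/2}\,\alpha\,k/n\bigr)^{2k}\cdot n^{d/2}\cdot k^{-d/2}\cdot (4k-d)^{-1/2},\qquad \alpha:=\tfrac{\pi}{\pi-\delta}+2^{d-2}\tfrac{d\sigma}{\gamma},$$
with $\rho=\tfrac{\sqrt d}{2}\alpha$. The calculus optimum of $(2k\rho/n)^{2k}$ in $k$ occurs at $2k\rho/n=1/e$, i.e.\ $k=n/(2e\rho)$, yielding the value $e^{-2k}=e^{-n/(e\rho)}$; choosing the integer $k$ in \eqref{optimalk} differs from the continuous optimum by at most one unit, which I would absorb into a constant factor of the form $e^{2+2/(e\rho)}$.

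Finally I would collect constants. The $n^{d/2}\cdot k^{-d/2}$ factor becomes $(2e\rho)^{d/2}$ at the optimal $k$ and the $(4k-d)^{-1/2}$ factor becomes $O(1/\sqrt{n})$, together producing the prefactor $C_{d,\delta,\sigma}/\sqrt n$ with $C_{d,\delta,\sigma}$ of the form \eqref{constantC}; the $(1+\lambda)^d/\gamma$ and $(4\pi-2\delta)^d/\pi^{2d}$ pieces are inherited directly from \eqref{l1normUV} and the $\sqrt{2^d\omega_{d-1}}$ from the spherical tail integration. I expect the main technical obstacle to be the bookkeeping in this last combining step: carefully verifying that the corrections from (i) replacing the continuous optimum of $k$ by its ceiling, (ii) the $\tfrac{\sqrt d}{2}$ shift in the integral comparison, and (iii) the mismatch between $\|j-x\|_\infty\ge n$ and the integral threshold, all fit under the assumption \eqref{bign} and do not worsen the stated exponent $-n/(e\rho)$.
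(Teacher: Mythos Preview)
Your proposal is correct and follows essentially the same strategy as the paper's proof. The only notable difference is in the tail-sum step: the paper bounds $\sum_{j\notin J_n}\|x-j\|_2^{-4k}$ by $2^d$ times an integral over $\{\|\tau\|_\infty\ge n-1\}$ (this is where the factor $2^d$ in $\sqrt{2^d\omega_{d-1}}$ actually originates), then shifts to $\{\|\tau\|_2\ge n-2\}$ before integrating in polar coordinates, rather than using your unit-cube packing with a $\sqrt d/2$ correction; both routes give the same exponent and comparable constants.
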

\begin{proof}
Let $P(D)$ be given by (\ref{PD}). We apply the estimate (\ref{abstractestimate}) of the approximation error, equations (\ref{usefulineq}) and (\ref{l1normUV}) to get
\begin{equation}\label{maintheoremeq1}
\Big|f(x)-(\cA_n f)(x)\Big|\le \|f\|_{L^2(\bR^d)}\frac{(1+\lambda)^d}\gamma\frac{(4\pi-2\delta)^d}{2^{d/2}\pi^{2d}}\frac1{(\sqrt{k})^d}k^{2k}(2\rho)^{2k}\Big(\sum_{j\notin[-n,n]^d}\frac1{\|x-j\|_2^{4k}}\Big)^{1/2},
\end{equation}
where $\|\cdot\|_2$ denotes the standard Euclidean norm on $\bR^d$. Also use $\|\cdot\|_\infty$ to denote the $\ell^\infty$-norm on $\bR^d$. As $x\in(0,1)^d$,
$$
\sum_{j\notin[-n,n]^d}\frac1{\|x-j\|_2^{4k}}\le 2^d\int_{\tau\in\bR^d,\|\tau\|_\infty\ge n-1}\frac{d\tau}{\|\tau-x\|_2^{4k}}\le\int_{\|\tau\|_\infty\ge n-2}\frac{2^d}{\|\tau\|_2^{4k}}d\tau\le\int_{\|\tau\|_2\ge n-2}\frac{2^d}{\|\tau\|_2^{4k}}d\tau.
$$
Using polar coordinates, we have
\begin{equation}\label{maintheoremeq2}
\sum_{j\notin[-n,n]^d}\frac1{\|x-j\|_2^{4k}}\le\int_{\|\tau\|_2\ge n-2}\frac{2^d}{\|\tau\|_2^{4k}}dt=2^d\omega_{d-1}\int_{n-2}^\infty\frac{r^{d-1}}{r^{4k}}dr=2^d\omega_{d-1}\frac{(n-2)^{-4k+d}}{4k-d}.
\end{equation}
Combining (\ref{maintheoremeq1}) and (\ref{maintheoremeq2}) gives
$$
\Big|f(x)-(\cA_n f)(x)\Big|\le \|f\|_{L^2(\bR^d)} \frac{(1+\lambda)^d}\gamma\frac{(4\pi-2\delta)^d}{2^{d/2}\pi^{2d}}\frac{\sqrt{2^d\omega_{d-1}}}{\sqrt{4k-d}}\Big(\frac{n-2}k\Big)^{d/2}\varphi(k),
$$
where
$$
\varphi(t):=t^{2t}(2\rho)^{2t}(n-2)^{-2t},\ \ t>0.
$$
Elementary analysis implies that $\varphi$ attains its minimum on $(0,+\infty)$ at $t=(n-2)/(2e\rho)$. We hence choose $k$ as in (\ref{optimalk}) and obtain
$$
\Big|f(x)-(\cA_n f)(x)\Big|\le \|f\|_{L^2(\bR^d)} \frac{(1+\lambda)^d}\gamma\frac{(4\pi-2\delta)^d}{\pi^{2d}}\frac{\sqrt{2^de\rho\omega_{d-1}}}{\sqrt{2n-4-d{e\rho}}}(e\rho)^{d/2}\Big(\frac1e+\frac{2\rho}{n-2}\Big)^{\frac{n-2}{e\rho}}.
$$
Finally, noticing
$$
\Big(\frac1e+\frac{2\rho}{n-2}\Big)^{\frac{n-2}{e\rho}}=\exp\Big(-\frac{n-2}{e\rho}\Big)\Big[\Big(1+\frac{2e\rho}{n-2}\Big)^{\frac{n-2}{2e\rho}}\Big]^{2}\le e^2\exp\Big(-\frac{n-2}{e\rho}\Big)
$$
and that under (\ref{bign}),
$$
\sqrt{2n-4-d{e\rho}}=\sqrt{n}\Big(\frac {2n-4-de\rho}{n}\Big)^{1/2}\ge \frac{\sqrt{n}}{\sqrt{2}},
$$
we reach (\ref{maintheoremeq}) and complete the proof.
\end{proof}

When the sampling measure $\nu$ is a tensor product of one-dimensional measures, the exponential term in the above approximation error estimate can be improved. This is to be shown in the next section.

\section{Tensor Product Sampling Probability Measures}
\setcounter{equation}{0}
We consider a separated probability Borel measure $\nu$ in this section, which is the case in many applications. The average sampling data takes the following form
\begin{equation}\label{average0}
\mu_{j}(f):=\int_{-\frac\sigma2}^{\frac\sigma2}\cdots\int_{-\frac\sigma2}^{\frac\sigma2}f(t+j)d\nu_1(t_1)\cdots d\nu_d(t_d), \ \ \ j\in\bZ^d,
\end{equation}
where each $\nu_l$, $1\le l\le d$ is a probability Borel measure on $[-\sigma/2,\sigma/2]$. Again, we aim at fast reconstruction of $f\in\cB_\delta(\bR^d)$ from $\{\mu_j(f):j\in[-n,n]^d\}$ following the approach in Section 2.

The exponential function $U$ defined by (\ref{U}) in this case is of tensor product type. Namely,
\begin{equation}\label{tensorU}
U(\xi)=\prod_{l=1}^d U_l(\xi_l),\ \ \xi\in\bR^d,
\end{equation}
where
$$
U_l(t):=\int_{-\frac\sigma2}^{\frac\sigma2} e^{itx}d\nu_l(x), \ \ t\in\bR,\ \ 1\le l\le d.
$$
Our method of reconstructing $f(x)$, $x\in(0,1)^d$ is still via (\ref{algorithm1}), where $\Phi\in\cB_{2\pi-\delta}(\bR^d)$ satisfies (\ref{ifandonlyif}). Considering that $U$ is of tensor product type, we would like $\Phi$ to be a tensor product of univariate functions as well. This means to extend each $1/U_l$ from $[-\delta,\delta]$ to a function on $[-2\pi+\delta,2\pi-\delta]$ so that the inverse Fourier transform of the extended function decays fast at infinity. By the one-dimensional version of (\ref{usefulineq}), we need to well bound the $L^1$-norm of high order derivatives of the extended function. An extension method by replacing the $L^1$-norm in the key minimization problem corresponding to (\ref{L1norm}) with an $L^2$-norm was proposed in \cite{HZ}. In this paper, we shall use the approach in Section 3 instead. We will make comments on the differences between the two methods at the end of this section.

Therefore, the weight function $\Phi$ is determined by
\begin{equation}\label{weight2}
\hat{\Phi}(\xi):=\prod_{l=1}^d\hat{\phi_l}(\xi_l),\ \ \xi\in\bR^d,
\end{equation}
and
\begin{equation}\label{phil}
\hat{\phi_l}(\xi_l):=\frac{V_k(\xi_l)}{U_l(\xi_l)},\ \ \xi_l\in\bR,\ \ 1\le l\le d,
\end{equation}
where
$$
V_k(s):=\left\{\begin{array}{cc}\displaystyle{e_k\int_{|s|}^{2\pi-\delta}\sin^{k}\bigg(\frac{\pi(t-\delta)}
{2\pi-2\delta}\bigg)dt},&\quad\delta<|s|\le2\pi-\delta,\\
1,&\quad|s|\le\delta,\\
0, &\quad\mbox{otherwise}\end{array}\right.
$$
and $e_k$ is a constant such that
$$
e_k\int_{\delta}^{2\pi-\delta}\sin^{k}\bigg(\frac{\pi(t-\delta)}{2\pi-2\delta}\bigg)dt=1.
$$
Note that the $V_k$ is slightly different from that in \eqref{Vk}. It is $k$-times continuously differentiable here.

Thanks to the tensor-product form of $\Phi$ in (\ref{weight2}), the analysis will be of less difficulty than that in Section 3 and the resulting approximation error will be improved consequently.  Our assumption on the relation between the bandwidth $\delta$ and sampling width $\sigma$ is now relaxed to
\begin{equation}\label{widthcondition}
(2\pi-\delta)\sigma<\pi.
\end{equation}
And we also need a constant playing the important role similar to that of $\gamma$ in Section 3:
$$
\tilde{\gamma}:=\cos\frac{(2\pi-\delta)\sigma}2>0.
$$

Let us get started with bounding the $L^1$-norm of each $\hat{\phi_l}^{(k)}$.
\begin{lemma}\label{main2lemma}
Let $\phi_l$, $1\le l\le d$ be constructed by (\ref{phil}). Then it holds
\begin{equation}\label{main2lemmaeq}
\|\hat{\phi_l}^{(k)}\|_{L^1([-2\pi+\delta,2\pi-\delta])}\le \frac{1+\tilde{\lambda}}{\tilde{\gamma}}\frac{4(2\pi-\delta)}{\pi\sqrt{k}}k^k\Big(\frac{\tilde{\gamma}\pi+\sigma(\pi-\delta)}{2\tilde{\gamma}(\pi-\delta)}\Big)^k,
\end{equation}
where
$$
\tilde{\lambda}:=\frac14\sqrt{\frac{\sigma\pi(\pi-\delta)}{\tilde{\gamma}}}.
$$
\end{lemma}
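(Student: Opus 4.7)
The plan is to imitate the strategy used in Lemmas \ref{Linfty1} and \ref{Linfty2} of Section 3, specialized to the univariate setting $d=1$ and adjusted for the fact that $V_k$ in Section 4 uses $\sin^k$ rather than $\sin^{2k}$, so that Bernstein's inequality (Lemma \ref{sin}) yields the sharper estimate $|(\sin^k)^{(j)}|\le k^j$. First I would apply the Leibniz rule to $\hat{\phi}_l=V_k\cdot(1/U_l)$ and bound
$$
\|\hat{\phi}_l^{(k)}\|_{L^1([-2\pi+\delta,2\pi-\delta])}
\le\sum_{j=0}^{k}\binom{k}{j}\,\|(1/U_l)^{(j)}\|_{L^\infty([-2\pi+\delta,2\pi-\delta])}\,\|V_k^{(k-j)}\|_{L^1([-2\pi+\delta,2\pi-\delta])},
$$
so that the task decouples into controlling the two kinds of derivatives separately.

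Next I would supply the two building-block estimates. A direct specialization of Lemma \ref{Linfty1} to $d=1$ yields, by the same induction on $j$ (using $|U_l^{(j)}(\xi)|\le (\sigma/2)^j$ together with $|U_l(\xi)|\ge\tilde{\gamma}$ on $[-2\pi+\delta,2\pi-\delta]$),
$$
\|(1/U_l)^{(j)}\|_{L^\infty([-2\pi+\delta,2\pi-\delta])}\le\frac{\sigma^{j}\,j^{j}}{2^{j}\,\tilde{\gamma}^{j+1}}.
$$
For the cutoff, I would first bound $e_k$ following the Wallis-type argument of Lemma \ref{dkestimate}: writing $e_k^{-1}=\frac{2(\pi-\delta)}{\pi}\int_0^\pi\sin^k t\,dt$ and running the product identity for $\int_0^\pi\sin^k t\,dt$ gives $e_k\le C\sqrt{k}/(\pi-\delta)$ with an explicit constant. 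Combining Bernstein with the affine chain-rule factor $\pi/(2\pi-2\delta)$ then gives $\|V_k^{(s)}\|_{L^\infty}\le e_k\bigl(\pi k/(2(\pi-\delta))\bigr)^{s-1}$ for $s\ge 1$, whence
$$
\|V_k^{(s)}\|_{L^1([-2\pi+\delta,2\pi-\delta])}\le 2(2\pi-\delta)\,e_k\Bigl(\tfrac{\pi k}{2(\pi-\delta)}\Bigr)^{s-1},\qquad s\ge 1,
$$
while for $s=0$ the elementary bound $\|V_k\|_{L^1}\le 2(2\pi-\delta)$ (since $|V_k|\le 1$) is used instead.

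Plugging these into the Leibniz sum and using $j^j\le k^j$, the terms with $j\le k-1$ collapse by the binomial theorem: the factors $(\sigma k/(2\tilde{\gamma}))^j$ and $(\pi k/(2(\pi-\delta)))^{k-j-1}$ sum to
$$
\Bigl(\tfrac{\pi k}{2(\pi-\delta)}+\tfrac{\sigma k}{2\tilde{\gamma}}\Bigr)^k\cdot\tfrac{2(\pi-\delta)}{\pi k}=k^k\Bigl(\tfrac{\tilde{\gamma}\pi+\sigma(\pi-\delta)}{2\tilde{\gamma}(\pi-\delta)}\Bigr)^k\cdot\tfrac{2(\pi-\delta)}{\pi k},
$$
so that together with the prefactor $\tfrac{2(2\pi-\delta)e_k}{\tilde{\gamma}}$ and $e_k\le C\sqrt{k}/(\pi-\delta)$ one recovers exactly the shape $\tfrac{1}{\tilde\gamma}\cdot\tfrac{4(2\pi-\delta)}{\pi\sqrt{k}}\cdot k^k\bigl(\tfrac{\tilde\gamma\pi+\sigma(\pi-\delta)}{2\tilde{\gamma}(\pi-\delta)}\bigr)^k$ claimed in \eqref{main2lemmaeq}.

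The hard part, and the only nontrivial obstacle, is the boundary term $j=k$: here $V_k$ itself appears (no derivatives), so the $e_k\sim\sqrt{k}$ gain is absent and the term grows like $\tfrac{2(2\pi-\delta)}{\tilde{\gamma}}(\sigma k/(2\tilde\gamma))^k$, which is larger by a factor of order $\sqrt{k}$ than a term of the same type in the clean sum above. The strategy is to absorb this defect into the factor $(1+\tilde{\lambda})$ by proving
$$
\tfrac{2(2\pi-\delta)}{\tilde{\gamma}}(\sigma k/(2\tilde{\gamma}))^k\le \tilde{\lambda}\cdot\tfrac{1}{\tilde\gamma}\cdot\tfrac{4(2\pi-\delta)}{\pi\sqrt{k}}k^k\Bigl(\tfrac{\pi}{2(\pi-\delta)}+\tfrac{\sigma}{2\tilde{\gamma}}\Bigr)^k.
$$
Extracting one factor of $\tfrac{\pi}{2(\pi-\delta)}$ from the right-hand side, this reduces to a single-variable inequality of the form $\sqrt{k}\,a^k\le C\tilde{\lambda}\,(a+b)^{k-1}b$ with $a=\sigma/(2\tilde{\gamma})$ and $b=\pi/(2(\pi-\delta))$, which holds precisely because $\tilde{\lambda}=\tfrac{1}{4}\sqrt{\sigma\pi(\pi-\delta)/\tilde\gamma}$ has been calibrated to match $\sqrt{k}(a/(a+b))^k\cdot(a+b)/b$ uniformly in $k$ (using $(1+x)^{k}\ge1+kx$-type estimates). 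Apart from this delicate calibration, the remainder of the computation is mechanical bookkeeping of constants.
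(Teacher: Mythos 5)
Your proposal follows essentially the same route as the paper's own (very terse) proof: Leibniz on $\hat{\phi_l}=V_k\cdot(1/U_l)$, the inductive bound $\|(1/U_l)^{(s)}\|_{L^\infty}\le \sigma^s s^s/(2^s\tilde{\gamma}^{s+1})$ from the $d=1$ case of Lemma \ref{Linfty1}, the Wallis/Bernstein bound $|V_k^{(s)}|\le(2/(\pi\sqrt{k}))^{\min(s,1)}(k\pi/(2\pi-2\delta))^s$, the binomial collapse after $j^j\le k^j$, and absorption of the underivated-$V_k$ boundary term into the factor $1+\tilde{\lambda}$. One micro-step needs adjusting: writing $a:=\sigma/(2\tilde{\gamma})$ and $b:=\pi/(2(\pi-\delta))$, your reduction "extract one factor of $b$ from $\tilde{\rho}^k=(a+b)^k$" only closes the boundary estimate when $a\le b/4$, i.e. $4\sigma(\pi-\delta)\le\tilde{\gamma}\pi$, which the width condition \eqref{widthcondition} does not guarantee; the calibration that actually works is $a+b\ge 2\sqrt{ab}$ (plus the elementary maximization of $\sqrt{k}\,(a/(a+b))^k$ over $k\ge 2$), which reduces the required inequality to $\sqrt{k}\,\big(\tfrac{a}{a+b}\big)^k\le\tfrac12\sqrt{a/b}$ — at $k=1$ this is exactly $(\sqrt{a/b}-1)^2\ge 0$, and $\tfrac{\pi}{4}\sqrt{a/b}$ is precisely the stated $\tilde{\lambda}$. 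With that substitution your argument is complete and matches the paper's intended computation.
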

\begin{proof}
Similar arguments as those in Lemmas \ref{dkestimate} and \ref{Linfty2} are able to yield
$$
\Big\|\Big(\frac1{U_l}\Big)^{(s)}\Big\|_{L^\infty([-2\pi+\delta,2\pi-\delta])}\le \frac1{\tilde{\gamma}}\frac{\sigma^s}{2^s\tilde{\gamma}^s}s^s,\ \ s\in\bZ_+,\ 1\le l\le d
$$
and
$$
|V_k^{(s)}(t)|\le \Big(\frac2{\pi\sqrt{k}}\Big)^{\min(s,1)}\Big(\frac{k\pi}{2\pi-2\delta}\Big)^s,\ \ 0\le s\le k,\ t\in\bR.
$$
The estimate (\ref{main2lemmaeq}) then follows from the above two equations and from
$$
|\hat{\phi_l}^{(k)}|\displaystyle{\le\sum_{s=0}^k{k\choose s}\Big|\Big(\frac1{U_l}\Big)^{(s)}\Big||V_k^{(k-s)}|}.
$$
Details are similar to those in the proof of Lemma \ref{Linfty2}.
\end{proof}

The approximation error for separated sampling measures is presented below.
\begin{theorem}\label{main2}
Let $\delta<\pi$, $\sigma>0$ satisfy (\ref{widthcondition}) and let $\Phi$ be constructed by (\ref{weight2}). Set
\begin{equation}\label{tilderho}
\tilde{\rho}:=\frac{\tilde{\gamma}\pi+\sigma(\pi-\delta)}{2\tilde{\gamma}(\pi-\delta)}\mbox{ and }\ k:=\lceil\frac{n-1}{e\tilde{\rho}}\rceil.
\end{equation}
The reconstruction method $\cA_n$ defined by (\ref{algorithm1}) has the approximation error
\begin{equation}\label{main2eq}
\Big| f(x)-(\cA_n f)(x)\Big|\le \|f\|_{L^2(\bR^d)}\frac{\tilde{C}_{d,\delta,\sigma}}{\sqrt{n}}\exp\Big(-\frac{n}{e\tilde{\rho}}\Big),\ x\in(0,1)^d,\ n\ge1+e\tilde{\rho},
\end{equation}
where
\begin{equation}\label{constanttildeC}
\tilde{C}_{d,\delta,\sigma}:=\sqrt{2e\tilde{\rho}}\exp(1+\frac1{e\tilde{\rho}})\frac{\sqrt{d}(1+\tilde{\lambda})(4\pi-2\delta)}{\tilde{\gamma}^{\frac{d+1}2}\pi^{\frac{d+3}{2}}}.
\end{equation}
\end{theorem}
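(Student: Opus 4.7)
The proof follows the template of Theorem \ref{maintheorem} but exploits the tensor-product form of $\Phi$ to sharpen constants. Starting from Proposition \ref{average16}, the task is to estimate the tail
$$S(x):=\sum_{j\in\bZ^d\setminus J_n}|\Phi(x-j)|^2,\qquad x\in(0,1)^d,$$
because $|f(x)-(\cA_n f)(x)|^2\le \frac{1}{(2\pi)^d}\|f\|_{L^2(\bR^d)}^2 S(x)$. Using $\Phi(x-j)=\prod_{l=1}^d\phi_l(x_l-j_l)$ and the set inclusion $\{j\notin J_n\}\subseteq\bigcup_l\{|j_l|\ge n+1\}$, we decompose by union bound into $d$ terms, each of which factors as
$$\Big(\sum_{|j_l|\ge n+1}|\phi_l(x_l-j_l)|^2\Big)\prod_{m\ne l}\Big(\sum_{j_m\in\bZ}|\phi_m(x_m-j_m)|^2\Big).$$

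For the ``exit'' coordinate $l$, I apply the one-dimensional analogue of (\ref{usefulineq}) with $P(D)=d^k/d\xi^k$ to $\hat{\phi_l}$, getting $|\phi_l(x_l-j_l)|\le \frac{1}{\sqrt{2\pi}}\|\hat{\phi_l}^{(k)}\|_{L^1}/|x_l-j_l|^k$. Lemma \ref{main2lemma} then bounds the $L^1$-norm by a multiple of $k^{k-1/2}\tilde\rho^k$, and the polynomial tail is controlled by the standard integral estimate $\sum_{|j_l|\ge n+1}1/|x_l-j_l|^{2k}\le 2\int_{n-1}^{\infty}t^{-2k}\,dt=2/((2k-1)(n-1)^{2k-1})$, using $x_l\in(0,1)$.

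For each ``inner'' coordinate $m\ne l$, I use $\phi_m\in \cB_{2\pi-\delta}(\bR)$ with Lemma \ref{phibessel} (for $\lambda=2\pi-\delta<2\pi$ so $\lceil \lambda/\pi\rceil=2$) to obtain $\sum_{j_m}|\phi_m(x_m-j_m)|^2\le 2\|\phi_m\|_{L^2(\bR)}^2$. Since $|V_k|\le 1$ on $[-2\pi+\delta,2\pi-\delta]$ and $|U_m|\ge\tilde\gamma$ there by the 1D analogue of (\ref{gammacondition}), Plancherel gives $\|\phi_m\|_{L^2(\bR)}^2=\|\hat\phi_m\|_{L^2}^2\le (4\pi-2\delta)/\tilde\gamma^2$, independent of $k$. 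Multiplying across the $d-1$ inner coordinates contributes $\tilde\gamma^{-(d-1)}$ and $(4\pi-2\delta)^{(d-1)/2}$ after a square root.

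Combining these ingredients and summing the $d$ union-bound terms produces a bound of the shape
$$S(x)^{1/2}\le \sqrt{d}\cdot A_k \cdot B^{(d-1)/2}/\sqrt{(2k-1)(n-1)^{2k-1}}\,,$$
where $A_k^2\sim (k^{2k-1}\tilde\rho^{2k})/\tilde\gamma^2$ and $B$ collects the inner-dimension constant. The function $k\mapsto k^k\tilde\rho^k/(n-1)^k$ is minimized by elementary calculus at $k=(n-1)/(e\tilde\rho)$, motivating the choice (\ref{tilderho}); evaluating there yields $(k\tilde\rho/(n-1))^{2k}\le e^{-2(n-1)/(e\tilde\rho)+2}$, which delivers the exponential factor $\exp(-n/(e\tilde\rho))$ after absorbing the $O(1)$ correction into the multiplicative constant. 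Under the hypothesis $n\ge 1+e\tilde\rho$, the factor $(n-1)/(2k-1)\le e\tilde\rho$ is controlled, and the prefactor collapses to $\tilde C_{d,\delta,\sigma}/\sqrt{n}$ in the form (\ref{constanttildeC}). The main obstacle is the careful bookkeeping of constants through the mixed polynomial-exponential optimization; in particular, tracing which powers of $\tilde\gamma$, $(1+\tilde\lambda)$, and $(4\pi-2\delta)$ arise from the exit coordinate versus the inner coordinates is what produces the precise form of $\tilde C_{d,\delta,\sigma}$, and the dimensional factor $\sqrt{d}$ is a direct consequence of the $d$-fold union bound.
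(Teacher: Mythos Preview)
Your approach is essentially the paper's: start from Proposition~\ref{average16}, use the tensor-product factorization together with the union bound $\{j\notin J_n\}\subseteq\bigcup_l\{|j_l|\ge n+1\}$, treat the ``exit'' coordinate via the one-dimensional version of \eqref{usefulineq} and Lemma~\ref{main2lemma}, treat the ``inner'' coordinates via Lemma~\ref{phibessel}, and then optimize in $k$ exactly as in Theorem~\ref{maintheorem}. The integral tail estimate and the choice $k=\lceil (n-1)/(e\tilde\rho)\rceil$ match the paper line for line.

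The one substantive discrepancy is in your inner-coordinate bound. You use $\|\hat\phi_m\|_{L^2}^2\le (4\pi-2\delta)/\tilde\gamma^2$, obtained from $|V_k|\le 1$, $|U_m|\ge\tilde\gamma$, and the measure of the support. The paper instead asserts the sharper estimate $\|\hat\phi_m\|_{L^2}^2\le 1/\tilde\gamma$, and it is precisely this that produces the powers $\tilde\gamma^{-(d+1)/2}$ and $(4\pi-2\delta)^{1}$ in \eqref{constanttildeC}. With your bound the product over $m\ne l$ contributes an additional factor $\bigl((4\pi-2\delta)/\tilde\gamma\bigr)^{(d-1)/2}$, so your bookkeeping would \emph{not} collapse to exactly the constant $\tilde C_{d,\delta,\sigma}$ as you claim at the end; you would obtain the same exponential rate $\exp(-n/(e\tilde\rho))$ and the same $1/\sqrt{n}$ prefactor, but with a strictly larger dimension-dependent constant. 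If your goal is to reproduce \eqref{constanttildeC} exactly, you need to justify (or replace your crude bound by) the paper's estimate $\prod_{m\ne l}\|\hat\phi_m\|_{L^2}^2\le \tilde\gamma^{-(d-1)}$; otherwise your argument proves the theorem only up to this harmless enlargement of the constant.
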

\begin{proof}
By (\ref{abstractestimate}), we have for $x\in(0,1)^d$
\begin{equation}\label{main2eq1}
\Big| f(x)-(\cA_n f)(x)\Big|\le\frac{\|f\|_{L^2(\bR^d)}}{(\sqrt{2\pi})^d}\Big(\sum_{j\notin[-n,n]^d}\prod_{l=1}^d|\phi_l(x_l-j_l)|^2\Big)^{1/2}.
\end{equation}
Since
$$
\{j\in\bZ_+^d:j\notin[-n,n]^d\}\subseteq\bigcup_{l=1}^d\{j\in\bZ_+^d:|j_l|\ge n+1\},
$$
it holds
\begin{equation}\label{main2eq2}
\sum_{j\notin[-n,n]^d}\prod_{l=1}^d|\phi_l(x_l-j_l)|^2\le \sum_{l=1}^d \Big(\sum_{|j_l|\ge n+1}|\phi_l(x_l-j_l)|^2\Big)\prod_{m\ne l}\Big(\sum_{j_m\in\bZ}|\phi_m(x_m-j_m)|^2\Big).
\end{equation}
As each $\phi_m\in \cB_{2\pi-\delta}(\bR)$, by Lemma \ref{phibessel},
$$
\prod_{m\ne l}\Big(\sum_{j_m\in\bZ}|\phi_m(x_m-j_m)|^2\Big)\le\prod_{m\ne l} 2\|\phi_m\|_{L^2(\bR)}^2=2^{d-1}\prod_{m\ne l}\|\hat{\phi}_m\|_{L^2(\bR)}^2\le\frac{2^{d-1}}{\tilde{\gamma}^{d-1}}.
$$
The above equation together with (\ref{main2eq1}) and (\ref{main2eq2}) gives
\begin{equation}\label{main2eq3}
\Big| f(x)-(\cA_n f)(x)\Big|\le\frac{\|f\|_{L^2(\bR^d)}}{(\sqrt{2\pi})^d}\sqrt{\frac{2^{d-1}}{\tilde{\gamma}^{d-1}}}\Big( \sum_{l=1}^d \Big(\sum_{|j_l|\ge n+1}|\phi_l(x_l-j_l)|^2\Big)\Big)^{1/2},\ \ x\in(0,1)^d.
\end{equation}
Now use
$$
|\phi_l(x_l-j_l)|\le \frac1{\sqrt{2\pi}}\frac{\|\hat{\phi_l}^{(k)}\|_{L^1([-2\pi+\delta,2\pi-\delta])}}{|x_l-j_l|^k}
$$
and Lemma \ref{main2lemma} to obtain
\begin{equation}\label{main2eq4}
\Big( \sum_{l=1}^d \Big(\sum_{|j_l|\ge n+1}|\phi_l(x_l-j_l)|^2\Big)\Big)^{1/2}\le \frac1{\sqrt{2\pi}}\frac{1+\tilde{\lambda}}{\tilde{\gamma}}\frac{4(2\pi-\delta)}{\pi\sqrt{k}}k^k\tilde{\rho}^k
\sqrt{d}\Big(\sum_{|m|\ge n}\frac{1}{m^{2k}}\Big)^{1/2}.
\end{equation}
We estimate
$$
\sum_{|m|\ge n}\frac{1}{m^{2k}}\le 2\int_{n-1}^{+\infty}\frac1{t^{2k}}dt=\frac{(n-1)^{-2k+1}}{k-\frac12}.
$$
Combining the above equation with \eqref{main2eq3}, \eqref{main2eq4}, we reach
$$
\Big| f(x)-(\cA_n f)(x)\Big|\le \frac{\|f\|_{L^2(\bR^d)}}{(\sqrt{2\pi})^d}\sqrt{\frac{2^{d-1}}{\tilde{\gamma}^{d-1}}}\frac1{\sqrt{2\pi}}\frac{1+\tilde{\lambda}}{\tilde{\gamma}}\frac{4(2\pi-\delta)}{\pi\sqrt{k}}k^k\tilde{\rho}^k
\sqrt{d}\frac{(n-1)^{-k+\frac12}}{\sqrt{k-\frac12}}.
$$
The optimal choice $k=\lceil \frac{n-1}{e\tilde{\rho}}\rceil$ results in the desired estimate (\ref{main2eq}).
\end{proof}

We make a few comparisons between Theorems \ref{maintheorem} and \ref{main2} in order to explain the advantages brought by separated sampling measures. First, the important constant $\tilde{\rho}$ (\ref{tilderho}) in (\ref{main2eq}) is independent of the dimension $d$, while $\rho$ (\ref{bign}) in \eqref{maintheoremeq} will increase as $d$ increases, weakening the exponential approximation ability in (\ref{maintheoremeq}) for high dimensions. Besides, the constant $\tilde{c}_{d,\delta,\sigma}$ in (\ref{constanttildeC}) is much smaller than $c_{d,\delta,\sigma}$ in (\ref{constantC}). Finally, we compare the width condition $d\sigma(2\pi-\delta)<\pi$ for general sampling measures with $\sigma(2\pi-\delta)<\pi$ for separated sampling measures. Apparently, the first requirement implies that the sampling width needs to shrink as $d$ increases while the second one allows the width to keep the same for all dimensions.

Finally, we compare the construction in Section 4 when reduced to the one-dimensional case with that in \cite{HZ}. Each of the two methods has its pros and cons. The construction in \cite{HZ} requires solving a linear system with a Hilbert matrix as the coefficient matrix. Although the size of the Hilbert matrix is typically very small, the construction is somewhat inconvenient compared to (\ref{weight2}), which has a closed-form. In terms of the sampling width, \cite{HZ} requires $\sigma\delta<\pi$, which is better than $\sigma(2\pi-\delta)<\pi$ here.

{\small

}

\begin{thebibliography}{02}

\bibitem{Aldroubi2002} A. Aldroubi, Non-uniform weighted average sampling and reconstruction in shift-invariant and
wavelet spaces, {\it Appl. Comput. Harmon. Anal.} \textbf{13} (2002), 151--161.

\bibitem{Aldroubi2005} A. Aldroubi, Q. Sun and W. S. Tang, Convolution, average sampling, and a calderon resolution of
the identity for shift-invariant spaces, {\it J. Fourier Anal. Appl.} \textbf{11} (2005), 215--244.


\bibitem{Ole} O. Christensen, {\it Frames and Bases}, Birkh\"{a}user, Boston, 2008.

\bibitem{DaubechiesDevore} I. Daubechies and R. DeVore, Approximating a bandlimited function using very coarsely quantized data: a family of stable sigma-delta modulators of arbitrary order, {\it Ann. of Math. (2)} \textbf{158} (2003), 679--710.

\bibitem{GP} A. G. Garc\'{\i}a and A. Portal, Sampling in the functional Hilbert space induced by a Hilbert space valued
kernel, {\it Appl. Anal.} \textbf{82} (2003), 1145--1158.

\bibitem{Grochenig} K. Gr\"{o}chenig, Reconstruction algorithms in irregular sampling, {\it Math. Comput.} \textbf{45} (1992), 181--194.

\bibitem{Hammerich1} E. Hammerich, Sampling in shift-invariant spaces with Gaussian
generator, {\it Sampl. Theory Signal Image Process.} \textbf{6}
(2007), 71--86.

\bibitem{HT} H. D. Helms and J. B. Thomas, Truncation error of sampling-theorem
expansions, {\it Proc. IRE} \textbf{50} (1962), 179--184.

\bibitem{HKK} Y. M. Hong, J. M. Kim and K. H. Kwon, Sampling theory in abstract reproducing kernel
Hilbert space, {\it Sampl. Theory Signal Image Process.} \textbf{6}
(2007), 109--121.

\bibitem{D. Jagerman}  D. Jagerman, {Bounds for truncation error of the sampling expansion}, {\it SIAM J. Appl. Math.} \textbf{14} (1966), 714--723.

\bibitem{Jordan} K. L. Jordan, Discrete representation of random
signals, Technical Report 378, MIT, Cambridge, 1961.

\bibitem{MNS} C. V. M. van der Mee, M. Z. Nashed and S. Seatzu, Sampling expansions and interpolation in unitarily translation invariant reproducing kernel Hilbert spaces, {\it Adv. Comput.
Math.} \textbf{19} (2003), 355--372.


\bibitem{Rivlin} C. A. Micchelli and T. J. Rivlin, {\it Lectures on Optimal Recovery}, {Lecture Notes in Mathematics 1129}, Springer-Verlag, Berlin, 1985.

\bibitem{Charles A. Micchelli} C. A. Micchelli, Y. Xu and H. Zhang, {Optimal learning of bandlimited functions from
localized sampling}, {\it J. Complexity} \textbf{25}
(2009), 85--114.

\bibitem{NW} M. Z. Nashed and G. G. Walter, General
sampling theorems for functions in reproducing kernel Hilbert
spaces, {\it Math. Control Signals Systems} \textbf{4} (1991),
363--390.

\bibitem{Qian1} L. Qian, {On the regularized Whittaker-Kotel'nikov-Shannon sampling formula}, {\it Proc. Amer. Math. Soc.} \textbf{131} (2003), 1169--1176.

\bibitem{C. E. Shannon} C. E. Shannon, {Communication in the presence of noise}, {\it Proc. IRE} \textbf{37} (1949), 10--21.


\bibitem{SunZhou2002a} W. Sun and X. Zhou, Reconstruction of band-limited signals from local averages, {\it IEEE Trans.
Inform. Theory} \textbf{48} (2002), 2955--2963.

\bibitem{SunZhou2002} W. Sun and X. Zhou, Reconstruction of bandlimited functions from local averages, {\it Constr. Approx.} \textbf{18} (2002), 205--222.

\bibitem{GS} G. Szeg\"{o}, {\it Orthogonal Polynomials},
Published by the American Mathematical Society, New York City,
1939.

\bibitem{TI} B. S. Tsybakov and V. P. Iakovlev, On the accuracy of
restoring a function with a finite number of terms of Kotel'nikov
series, {\it Radio Eng. Electron. Phys.} \textbf{4} (1959),
274--275.

\bibitem{Wei2} G. W. Wei, Quasi wavelets and quasi interpolating wavelets, {\it Chem. Phys. Lett.} \textbf{296} (1998), 215--222.


\bibitem{Wei1} G. W. Wei, D. S. Zhang, D. J. Kouri and D. K. Hoffman, Lagrange distributed approximating functionals, {\it Phys. Rev. Lett.} \textbf{79} (1997), 775--779.


\bibitem{Whittaker} E. T. Whittaker, On the functions which are represented by the expansion of the interpolation theory, {\it Proc. Roy. Soc. Edinburgh Sect. A} \textbf{35} (1915), 181-194.

\bibitem{HZ} H. Zhang, Exponential approximation of bandlimited functions from average oversampling, submitted, arXiv: 1311.4294.

\bibitem{Zhang2009} H. Zhang, Y. Xu, and J. Zhang, Reproducing kernel Banach spaces for machine learning, {\it J. Mach. Learn. Res.} {\bf 10} (2009), 2741--2775.
    
\bibitem{Zhang2} H. Zhang and J. Zhang, Frames, Riesz bases, and sampling expansions in Banach spaces via semi-inner products, {\it Appl. Comput. Harmon. Anal.} \textbf{31} (2011), 1--25.









%
%





%

%


%
%






%




%
%
%
%
%
%
%
%
%
%
%
%
%
%
%
%
%
%
%
%
%
%
%
%
\end{thebibliography}
\end{document}